\documentclass[12pt]{article}


\setlength\oddsidemargin{2.5cm}
\addtolength\oddsidemargin{-1in}
\setlength\evensidemargin{2.5cm}
\addtolength\evensidemargin{-1in}
\setlength\textwidth{8.5in}
\addtolength\textwidth{-5cm}
\setlength\textheight{11in}
\addtolength\textheight{-5cm}
\voffset=-1.0in
\setlength\topmargin{1cm}
\setlength\headheight{1cm}
\setlength\headsep{0.5cm}

\setlength\footskip{1.0cm}

\makeatletter
\def\@seccntformat#1{\csname the#1\endcsname.\hspace*{0.5em}}
\makeatother

\usepackage{amsmath}
\usepackage{amsfonts}
\usepackage{amssymb}
\usepackage{graphicx,subfigure}
\usepackage{url}
\urlstyle{rm} 

\usepackage{xspace}
\usepackage{amsbsy}%
\usepackage{epsfig}%

\usepackage{theorem}%
\begingroup \makeatletter
\gdef\th@plain{\normalfont\itshape
  \def\@begintheorem##1##2{%
        \item[\hskip\labelsep \theorem@headerfont ##1\ ##2.]}%
\def\@opargbegintheorem##1##2##3{%
   \item[\hskip\labelsep \theorem@headerfont ##1\ ##2\ (##3).]}}
\endgroup



\newcommand{\EKhref}[2]{URL: \href{#1}{\nolinkurl{#2}}}
\usepackage[%
breaklinks%
,colorlinks%
,linkcolor=red
,anchorcolor=blue%
,pagecolor=blue%
,citecolor=blue%
,bookmarks=false%
]{hyperref}
\usepackage{breakurl}

\hyphenation{ap-prox-i-ma-tion ap-prox-i-mate Kalt-ofen Sprin-ger
Len-stra com-mu-ta-tive poly-no-mi-al dis-crim-i-nant Buch-berger
Berle-kamp meth-od al-go-rithm prob-a-bil-i-ty pol-y-no-mi-al}

\newcommand{\comment}[1]{}

\newcommand{\bb}{{\mathbf b}}
\newcommand{\I}{{\mathbf I}}
\newcommand{\III}{{\mathbb {I}}}
\newcommand{\ZZ}{{\mathbb Z}}

\newcommand{\RR}{{\mathbb R}}
\newcommand{\cc}{{\mathbf c}}
\newcommand{\ff}{{\mathbf f}}

\newcommand{\HH}{{\mathbf H}}
\newcommand{\xx}{{\mathbf x}}
\newcommand{\yy}{{\mathbf y}}

\newcommand{\qq}{{\mathbf q}}

\newcommand{\uu}{{\mathbf u}}
\newcommand{\vv}{{\mathbf v}}
\newcommand{\ww}{{\mathbf w}}
\newcommand{\mm}{{\mathbf m}}

\newcommand{\II}{{\mathcal I}}
\newcommand{\ldl}{PLDL${}^{\text{T}}$P${}^{\text{T}}$\xspace}

\newtheorem{theorem}{Theorem}
\newtheorem{lemma}{Lemma}

\newtheorem{definition}{Definition}
\newtheorem{problem}{Problem}
{\theorembodyfont{\upshape} \newtheorem{remark}{Remark}} 
{\theorembodyfont{\upshape} \newtheorem{example}{Example}} 

\makeatletter
\def\@yproof[#1]{\@proof{ #1}}
\def\@proof#1{\begin{trivlist}\item[]{\em Proof#1.}}
\newenvironment{proof}{\@ifnextchar[{\@yproof}{\@proof{}
}}{~$\Box$\end{trivlist}}
\makeatother

\title{%
Exact Safety Verification of Interval Hybrid Systems Based on Symbolic-Numeric Computation\raisebox{0ex}{*}%
}

\chardef\ttlb="7B 
\chardef\ttrb="7D 
\chardef\ttti="7E 
\author{%
Zhengfeng Yang$^a$,  Min Wu$^a$ and Wang Lin$^{b}$
\\
\\
\small\llap{$^a$}
Shanghai Key Laboratory of Trustworthy Computing\\
[-0.2ex] \small East China Normal University, Shanghai 200062, China
\\[-0.2ex]
\small\llap{$^b$}
College of Mathematics and Information Science \\
[-0.2ex] \small Wenzhou University, Zhejiang 325035, China
\\[-0.2ex]
\small {\ttfamily \{zfyang,mwu\}@sei.ecnu.edu.cn;
linwang@wzu.edu.cn}}
\begin{document}

\date{}

\maketitle
\def\thefootnote{\fnsymbol{footnote}}
\footnotetext[1]{%
\footnotesize
This material is supported in part by the National Natural Science
Foundation of China under Grants 91118007,61021004(Yang,Wu), and the
Fundamental Research Funds for the Central Universities under Grant
78210043(Yang,Wu). }

\begin{abstract}
In this paper, we address the problem of safety verification of
interval hybrid systems
in which the coefficients are intervals instead of explicit numbers.
A hybrid symbolic-numeric method, based on SOS relaxation and
interval arithmetic certification, is proposed to generate exact
inequality invariants for safety verification of interval hybrid
systems. As an application,
an approach is provided to verify
safety properties of
non-polynomial hybrid systems.
Experiments on the benchmark hybrid systems
are given to illustrate the
efficiency of our
method.
\end{abstract}

\section{Introduction}

As a tool of modelling cyber-physical systems,
hybrid systems are dynamical systems governed by interacting discrete and
continuous dynamics. The continuous dynamics of a hybrid system is specified by
differential equations, and for discrete transitions, the hybrid
system changes state instantaneously and possibly discontinuously.
Among the most important research issues in formal analysis of
hybrid systems are
{\em safety},
i.e., deciding whether a given property 
holds in all the reachable states, and  its dual problem {\em
reachability}, i.e., deciding if there exists a trajectory starting
from the initial set that reaches a state satisfying the given
property. Due to the infinite number of possible states in state
spaces, safety verification and reachability analysis of hybrid
systems presents a challenge.
For general (exact) 
 hybrid systems, some well-established techniques
\cite{CarbonellTiwari2005, GulwaniTiwari2008, liu2011complete,
 Platzer:2009, SSM:2008, prajna2007safety, tiwari2003approximate,
sturm2011verification} based on invariant generation have been
proposed for safety verification of the systems. However,
when applying these techniques,
one can not avoid numerical errors or may suffer from high complexity.
To take advantage of the efficiency of numerical computation and the
error-free property of symbolic computation, we proposed
in~\cite{WY11} a hybrid symbolic-numeric method via
exact sums-of-squares (SOS) representation
to construct differential
invariants for continuous dynamic systems,
and  generalized in~\cite{LWYZ11,yang2012exact} the idea for safety
verification of polynomial hybrid systems.

A common assumption made on
hybrid systems
is that the coefficients of the involved equations are specific values.
In practice, however, due to the increasing
complexity of modern systems, some disturbance and modeling errors
may be contained in the system description, and, in addition, there may be
noisy and inexact data involved in the realistic problem.
All these factors may contribute to inexactness of the data used to describe the hybrid systems.
To take
this uncertainty
into account, it would be more reasonable and appropriate to use intervals
rather than
concrete but inexact data to represent the
hybrid systems.
This motivates us to introduce
the notion of {\em interval polynomial hybrid systems}, by which we
mean the
differential equations in 
hybrid systems are represented as
polynomials with interval coefficients.

In this paper, we consider safety verification of interval
polynomial hybrid systems, i.e., deciding whether none of
trajectories of an interval hybrid system starting from the initial
set can enter some unsafe regions in the state spaces. In
\cite{yang2012exact} we applied a symbolic-numeric computation
method, based on bilinear matrix inequality (BMI) solving and exact
SOS polynomials representations, to
 deal with
 {\it exact} safety verification for polynomial hybrid systems.
  In this paper, we extend the techniques in \cite{yang2012exact} to generate exact invariants for
  verifying interval hybrid systems.
  The idea lies in applying interval arithmetic
  to verify positive semidefiniteness
  of interval matrices and
 existence of solutions to interval polynomial equations.
As an application,
we apply the above approach to
verify {\em safety of non-polynomial hybrid systems}
by relaxing continuous dynamics of non-polynomial forms to those of
interval polynomial forms,
and then studying safety of the latter system whose set of
trajectories contains that of the original non-polynomial system.

The contributions of our paper are as follows.
First, an approach is proposed to verify safety
property of an interval hybrid system, therefore, safety
property is guaranteed for an arbitrary hybrid system within the
given interval system.
Moreover, our approach can generate exact invariants instead of approximate ones, overcoming
the unsoundness of verification caused by
numerical errors \cite{PC:07}. And in comparison with some symbolic
approaches based on qualifier elimination, our approach is
more efficient and practical, because parametric polynomial
optimization problem based on SOS relaxation can be solved in
polynomial time theoretically.
 Second, a key problem we consider in safety verification is that of determining nonnegativity of
interval multivariate polynomials,
which is a
fundamental problem in real algebraic geometry.
Thirdly, for a non-polynomial
function,
we propose a rigorous polynomial approximation method to compute its
approximate polynomial with polynomial lower and upper bounds of the
interpolation error. Compared with the classical Taylor
approximation, the polynomial bounds we give is much sharper.

The rest of the paper is organized as follows.
In Section \ref{sect:notion}, we introduce
some notions related to interval hybrid systems. Section
\ref{sec:Nonnegativity} is devoted to determining nonnegativity
of interval multivariate polynomials.
In Section \ref{sect:Interval}, two techniques which combine SOS
relaxation with interval arithmetic are proposed to generate
invariants
of interval hybrid systems with
small and large radii, respectively. As an application, safety
verification of 
non-polynomial hybrid systems is discussed
in Section \ref{nonpolynomial}.
Section~\ref{sect:conclusion} concludes the paper.


\section{Interval Hybrid Systems and Safety Verification}\label{sect:notion}

Let us first review some notions of general hybrid
systems~\cite{Henzinger1996,SSM:2008}.

\begin{definition}[Hybrid System]\label{def:hybrid_system}
 A {\em hybrid system} is a tuple~$\HH: \langle V, L,
  {\mathcal T},  { \Theta}, \mathcal D, \Psi,  \ell_0\rangle$
  with
\begin{itemize}
\item $V=\{x_1,...,x_n\}$, a set of real-valued system {\em variables};

\item $L$, a finite set of locations;

\item $\ell_0\in L$, the {\em initial location};

\item $\mathcal T$, a set of
transitions. Each transition
$\tau:\langle \ell,\ell',g_{\tau}, \rho_{\tau} \rangle \in \mathcal
T$ consists of a prelocation~$\ell\in L$, a postlocation~$\ell'\in
L$, the guard condition $g_{\tau}$ over  $V$, and an
assertion~$\rho_\tau$ over~$V\cup V^\prime$ representing the
next-state relation, where $V^\prime=\{x_1',...,x_n'\}$ denotes the
next-state variables;

\item $\Theta$, an assertion specifying the {\em initial condition};

\item $\mathcal D$, a map that associates each location~$\ell\in L$ to a {\em differential rule}
(a.k.a.\ a {\em vector field})~$\mathcal D(\ell)$,
an autonomous system 
$\dot x_i= f_{\ell,i}(V)$ for each~$x_i\in V$,  written briefly as
$\dot \xx= \ff_{\ell}(\xx)$;

\item $\Psi$, a map that
sends
$\ell\in L$ to a {\em
location invariant}~$\Psi (\ell)$, an assertion
over~$V$.

\end{itemize}

\end{definition}


In reality, due to measuring errors or disturbance, the data involved in the systems may be inaccurate.
It is then reasonable to consider hybrid systems in which some data are
given as interval estimates rather than specific values, the so-called
{\em interval hybrid systems}.
Similar to
Definition~\ref{def:hybrid_system}, an interval hybrid
system~$\I\HH$
is defined to be a tuple
$$ \langle V, L, {\mathcal T}, {
\Theta}, [\mathcal D], \Psi, \ell_0\rangle,
$$
where
$V$,
$L$,
$\mathcal T$,
$\Theta$,
$\Psi$, $\ell_0$ are the same as in
Definition~\ref{def:hybrid_system}, while $[\mathcal D]$ represents
a map sending each location $\ell\in L$ to an {\em interval}
differential rule $[\mathcal D(\ell)]$
of the form
$$
\dot x_i = [f]_{\ell,i}(\xx)\quad i=1,\dots, n,
$$ by which we mean $[f]_{\ell,i}(\xx)$
is a real function with interval coefficients;
for brevity, we write $[\mathcal D(\ell)]$ as $\dot \xx = [\ff]_{\ell}(\xx)$;
%
For more details on interval arithmetic, please refer to Appendix A.


A hybrid system $\HH:\langle V, L, {\mathcal T}, { \Theta}, \mathcal
D, \Psi, \ell_0\rangle$ is said to be {\em within} an interval hybrid system $\I\HH: \langle V, L,
{\mathcal T}, {\Theta}, [\mathcal D], \Psi, \ell_0\rangle$ if
$f_{\ell, i}(\xx)\in [f]_{\ell,i}(\xx)$ for each~$\ell\in L$ and
$i=1,\dots,n$, or written briefly as $D(\ell)\in [D](\ell)$.

In this paper, we will
mainly study safety
verification of interval hybrid systems.
Recall that a hybrid system is said to be {\em safe} if none of the trajectories starting from any state in the initial set can evolve to an unsafe region.
Similarly, given  a prespecified unsafe region $X_u\subset
\RR^n$,
an interval
system $\I\HH: \langle V, L, {\mathcal T}, { \Theta}, [\mathcal D],
\Psi, \ell_0\rangle$ is said to be {\em safe} if every hybrid system within $\I\HH$ is safe.
This is to say, none of the trajectories of interval hybrid system $\I\HH$
starting from any state in the initial set can evolve to $X_u$, or,
equivalently, any state in $X_u$ is not reachable.



Recall that 
an invariant of a hybrid system~$\HH$
is an
over-approximation of all the reachable states of the system $\HH$. Since generating invariants of arbitrary form for
hybrid systems is computationally
hard,
the usual technique
is to compute inductive invariants. It is shown in \cite{yang2012exact} that safety verification
of general
hybrid systems
can be reduced to finding inductive invariants
(a.k.a.\ {\em barrier certificates} in \cite{prajna2007safety})
 of
hybrid systems,
as
described in the following theorem.
\begin{theorem}\label{thm:location_inv}
[
\cite{prajna2007safety}, \cite{WY11}] 
Let $\HH: \langle V, $ $L$, ${\mathcal T},$  ${\Theta}, \mathcal D,
\Psi,  \ell_0\rangle$ be a general hybrid system. Suppose
that for each location $\ell\in L$, there exists a function
$\varphi_{\ell}(\xx)$ such that 
\begin{description}
\item[(i)] $\Theta \models \varphi_{\ell_0}(\xx) \geq 0;$
\item[(ii)] $\varphi_{\ell}(\xx)\geq0 \wedge  
g(\ell, \ell') \wedge \rho(\ell,\ell') \models
\varphi_{\ell'}(\xx')\geq 0$,\,
    for any transition $\langle \ell, \ell', g, \rho\rangle$ going
out of~$\ell$;
\item[(iii)] $\varphi_{\ell}(\xx)\geq0  
\wedge\Psi(\ell)\models \dot{\varphi_{\ell}}(\xx)> 0,$ where $\dot
\varphi_\ell(\xx)$ denotes the {\em Lie-derivative} of
$\varphi_\ell(\xx)$ along the vector field $\mathcal D(\ell)$, i.e.,
$\dot{\varphi_{\ell}}(\xx)=\sum_{i=1}^n\frac{\partial\varphi_{\ell}}{\partial
x_i}  f_{\ell,i}(\xx)$.
    \end{description}
Then $\varphi_{\ell}(\xx)\geq0$ is an (inductive) invariant of the hybrid system
$\bf H$ at location~$\ell$. If, moreover,
\begin{description}
\item[(iv)] $X_u(\ell)\models \varphi_{\ell}(\xx) < 0 \quad \mbox{for any } \ell \in L,$
\end{description}
then the safety of the system $\bf H$ is guaranteed.
\end{theorem}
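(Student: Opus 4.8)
The plan is to establish the theorem's two assertions separately: first that $\varphi_{\ell}(\xx)\geq 0$ defines an inductive invariant at each location $\ell$, and then that safety follows from condition (iv). For the invariant claim I would show that \emph{every} reachable state $(\ell,\xx)$ of $\HH$ satisfies $\varphi_{\ell}(\xx)\geq 0$, arguing by induction on the structure of a hybrid trajectory, which alternates continuous flows (governed by $\dot\xx=\ff_{\ell}(\xx)$ while $\Psi(\ell)$ holds) with discrete jumps across transitions $\langle \ell,\ell',g,\rho\rangle$. The base case is exactly condition (i): any initial state satisfies $\Theta$, hence $\varphi_{\ell_0}(\xx)\geq 0$. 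For the inductive step across a discrete transition, suppose the prelocation state $\xx$ satisfies $\varphi_{\ell}(\xx)\geq 0$ and a transition $\langle \ell,\ell',g,\rho\rangle$ is taken, producing $\xx'$; since the jump requires $g(\ell,\ell')$ and $\rho(\ell,\ell')$ to hold, condition (ii) immediately yields $\varphi_{\ell'}(\xx')\geq 0$, so the invariant is preserved across jumps.

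The heart of the argument is the continuous phase, where I would read condition (iii) as a barrier condition. Consider a flow $\xx(t)$, $t\in[0,T]$, in location $\ell$ with $\varphi_{\ell}(\xx(0))\geq 0$ and with $\xx(t)\in\Psi(\ell)$ throughout. Suppose, for contradiction, that $\varphi_{\ell}(\xx(t))<0$ for some $t$, and set $t^{*}=\inf\{t:\varphi_{\ell}(\xx(t))<0\}$. By continuity $\varphi_{\ell}(\xx(t^{*}))=0$, and since $\xx(t^{*})\in\Psi(\ell)$, condition (iii) gives $\dot\varphi_{\ell}(\xx(t^{*}))>0$, i.e.\ $\frac{d}{dt}\varphi_{\ell}(\xx(t))$ is strictly positive at $t^{*}$. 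Hence $\varphi_{\ell}(\xx(t))$ is strictly increasing through $t^{*}$ and stays positive for $t$ slightly greater than $t^{*}$, contradicting the definition of $t^{*}$. Therefore $\varphi_{\ell}(\xx(t))\geq 0$ for all $t\in[0,T]$, and the invariant survives continuous evolution. Combining the three cases, every reachable state $(\ell,\xx)$ satisfies $\varphi_{\ell}(\xx)\geq 0$.

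Finally, for safety: condition (iv) asserts that every unsafe state in $X_u(\ell)$ satisfies $\varphi_{\ell}(\xx)<0$, whereas we have just shown every reachable state satisfies $\varphi_{\ell}(\xx)\geq 0$. The reachable set and $X_u(\ell)$ are therefore disjoint at every location, so no unsafe state is reachable and the system is safe.

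I expect the continuous-phase barrier argument to be the main obstacle. The strict inequality in (iii) is essential: it is precisely what forbids a trajectory from grazing the boundary $\{\varphi_{\ell}=0\}$ and slipping into $\{\varphi_{\ell}<0\}$, whereas a nonstrict condition would leave open the possibility of tangential escape. Care is also needed in invoking $\Psi(\ell)$ correctly, since (iii) only constrains $\dot\varphi_{\ell}$ on states satisfying the location invariant, which is exactly where the admissible flow lives; one must therefore confirm that the hybrid-system semantics keep $\xx(t)$ within $\Psi(\ell)$ for the entire continuous segment used in the argument.
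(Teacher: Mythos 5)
Your proof is correct and follows essentially the same route as the source: the paper itself states this theorem without proof, citing \cite{prajna2007safety} and \cite{WY11}, where the argument is exactly your combination of induction over the alternating discrete/continuous structure of a trajectory with a first-crossing contradiction on the continuous segments, using the strict inequality $\dot{\varphi_\ell}(\xx)>0$ to rule out escape through the zero level set. Your closing remarks on why strictness matters and on $\Psi(\ell)$ confining the flow correctly identify the only delicate points, and your handling of the infimum time $t^*$ (continuity forcing $\varphi_\ell(\xx(t^*))=0$, then the positive derivative contradicting times $t_n\downarrow t^*$ with $\varphi_\ell<0$) is sound.
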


%
%
The notion of inductive invariants can be generalized for interval hybrid systems, as
defined in the following
\begin{definition}
[Inductive invariant]\label{def:inductiveinv_int}
For an interval hybrid system $\I\HH :\langle V, {L}, {\mathcal T},
{\Theta}, [{\mathcal D}], \Psi, \ell_0\rangle$, an {\em inductive
assertion map} $\II$ of $\I\HH$  is a map that associates with each
location~$\ell\in L$ an assertion~$\II(\ell)$ that holds initially
and is preserved by all discrete transitions and continuous flows
of~$\I\HH$. More formally, the map $\II$ satisfies the following
requirements:
\begin{description}
\item {\bf[Initial]} $\Theta
\models \II(\ell_0).$
\item {\bf[Discrete Consecution]} For each discrete transition $\tau:\langle
\ell,\ell',g_{\tau}, \rho_{\tau}\rangle$ starting from a state
satisfying~$\II(\ell)$, taking~$\tau$ leads to a state
satisfying~$\II(\ell')$, i.e., $\II(\ell) \wedge g_\tau \wedge
\rho_\tau \models\II(\ell')$ where~$\II(\ell')$ represents the
assertion~$\II(\ell)$ with the current state variables~$x_i$'s replaced by the next state variables~$x_i'$'s,
respectively.

\item {\bf[Continuous Consecution]} For location~$\ell \in L$ and states $\langle \ell, \xx_1\rangle$, $\langle \ell, \xx_2\rangle$ such that~$\xx_2$ evolves
from~$\xx_1$ according to
any differential rule~$\mathcal{D}(\ell)\in [\mathcal{D}](\ell)$, if~$\xx_1\models \II(\ell)$ then~$\xx_2\models
\II(\ell)$.
\end{description}
\end{definition}

The difference between inductive invariants of interval hybrid
systems and those of general hybrid systems lies in that for continuous consecution, any differential rule contained in the
interval differential rule must be considered.
Then
Theorem \ref{thm:location_inv} can be modified for
verifying safety of interval hybrid systems, as described in the following.
\begin{theorem}\label{thm:inv_int}
Let $\I\HH: \langle V, $ $L$, ${\mathcal T},$  ${\Theta}, [{\mathcal
D}], \Psi,  \ell_0\rangle$ be an interval hybrid system. Suppose
that for each $\ell\in L$, there exists a function
$\varphi_{\ell}(\xx)$ satisfying the conditions (i-ii) in Theorem
\ref{thm:location_inv}, and
\begin{description}
\item[(iii')] $\varphi_{\ell}(\xx)\geq0  
\wedge\Psi(\ell)\models \dot{\varphi_{\ell}}(\xx)> 0,$ here $\dot
\varphi_\ell(\xx)$ denotes the {\em Lie-derivative} of
$\varphi_\ell(\xx)$ along
any differential rule $\mathcal
D(\ell)
\in [\mathcal D(\ell)]$,
i.e.,
$\dot{\varphi_{\ell}}(\xx)=\sum_{i=1}^n\frac{\partial\varphi_{\ell}}{\partial
x_i}  f_{\ell,i}(\xx)$,
for any $
f_{\ell,i}(\xx)\in[f_{\ell,i}](\xx)$.
\end{description}
Then $\varphi_{\ell}(\xx)\geq0$ is an (inductive) invariant of the
interval hybrid system $\I\HH$ at location~$\ell$. If, moreover, the
condition (iv) in Theorem \ref{thm:location_inv} is satisfied,
then the safety of the
system $\I\HH$ is guaranteed.
\end{theorem}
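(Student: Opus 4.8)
The plan is to reduce Theorem~\ref{thm:inv_int} to the already-established Theorem~\ref{thm:location_inv} by exploiting the very definition of safety for interval systems: $\I\HH$ is safe precisely when every hybrid system $\HH$ within $\I\HH$ is safe. So it suffices to show that the hypotheses (i), (ii), (iii') together with (iv) force every such $\HH$ to satisfy the hypotheses of Theorem~\ref{thm:location_inv} with the same family of functions $\varphi_{\ell}(\xx)$.

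First I would fix an arbitrary hybrid system $\HH:\langle V,L,{\mathcal T},\Theta,\mathcal D,\Psi,\ell_0\rangle$ within $\I\HH$, so that its vector field satisfies $\mathcal D(\ell)\in[\mathcal D](\ell)$, i.e.\ $f_{\ell,i}(\xx)\in[f_{\ell,i}](\xx)$ for every $\ell\in L$ and $i=1,\dots,n$. Conditions (i) and (ii) involve only $\Theta$, the guards, the reset relations, and the functions $\varphi_{\ell}$, none of which depends on the choice of vector field within the interval, so they transfer verbatim to $\HH$. The essential step is condition (iii): since (iii') asserts $\varphi_{\ell}(\xx)\geq0\wedge\Psi(\ell)\models\dot{\varphi_{\ell}}(\xx)>0$ for \emph{every} admissible choice $f_{\ell,i}(\xx)\in[f_{\ell,i}](\xx)$, it applies in particular to the specific $f_{\ell,i}$ of $\HH$, and the Lie derivative $\dot{\varphi_{\ell}}(\xx)=\sum_{i=1}^{n}(\partial\varphi_{\ell}/\partial x_i)\,f_{\ell,i}(\xx)$ computed along $\mathcal D(\ell)$ is exactly the one appearing in (iii). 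Thus $\HH$ meets all hypotheses (i)--(iii) of Theorem~\ref{thm:location_inv}, whence $\varphi_{\ell}(\xx)\geq0$ is an inductive invariant of $\HH$; adding (iv), which is again vector-field-independent, yields safety of $\HH$. As $\HH$ was arbitrary within $\I\HH$, every hybrid system within $\I\HH$ is safe, and $\I\HH$ is safe by definition.

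To obtain the inductive-invariant claim for $\I\HH$ directly, I would check the three requirements of Definition~\ref{def:inductiveinv_int} with $\II(\ell)$ taken to be $\varphi_{\ell}(\xx)\geq0$: \textbf{[Initial]} is (i), \textbf{[Discrete Consecution]} is (ii), and \textbf{[Continuous Consecution]} is where the universal quantification in (iii') does the work, since for any rule $\mathcal D(\ell)\in[\mathcal D](\ell)$ the positivity of $\dot{\varphi_{\ell}}$ on the boundary $\{\varphi_{\ell}=0\}$ prevents a trajectory of that rule from crossing out of $\{\varphi_{\ell}\geq0\}$. The only genuine mathematical content is this barrier argument, but it is inherited unchanged from Theorem~\ref{thm:location_inv} applied to each rule separately; the interval version adds nothing beyond handling the whole family at once. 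Accordingly, I expect the main obstacle to be purely expository rather than technical: one must state the quantifier reduction carefully so that the single condition (iii') is seen to discharge condition (iii) simultaneously for the continuum of vector fields contained in $[\mathcal D](\ell)$, and confirm that no hypothesis other than (iii)/(iii') ever references the differential rule.
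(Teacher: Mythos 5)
Your proposal is correct and is essentially the argument the paper intends: the paper states Theorem~\ref{thm:inv_int} without a written proof, presenting it as an immediate modification of Theorem~\ref{thm:location_inv} in which the only change is that continuous consecution must hold for every differential rule $\mathcal D(\ell)\in[\mathcal D](\ell)$, which is exactly your quantifier-instantiation reduction. Your observation that (i), (ii), (iv) are vector-field-independent and that (iii') discharges (iii) for each system within $\I\HH$, combined with the definition of safety of $\I\HH$ as safety of every system within it, fills in precisely the routine details the paper leaves implicit.
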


In
our preceding papers \cite{LWYZ11, yang2012exact},
a symbolic-numeric method based on SOS relaxation, Gauss-Newton
refinement and rational vector recovery techniques
is proposed
to generate
polynomial
inequality
invariants
$\varphi_{\ell}(\xx)\ge 0$ at each
location $\ell\in L$ for general polynomial hybrid systems. This method can not be applied directly
on
interval hybrid systems.
In the sequel, we will
combine
BMI solving
with interval arithmetic to compute  polynomial invariants
$\varphi_{\ell}(\xx)\geq 0$
which satisfy
conditions in Theorem
\ref{thm:inv_int}.
For brevity, we will abuse the
notation~$\varphi_{\ell}(\xx)$ to represent both the
polynomial~$\varphi_{\ell}(\xx)$ and the
invariant~$\varphi_{\ell}(\xx)\geq 0$.


\section{Nonnegativity of Interval Polynomials}\label{sec:Nonnegativity}
To determine whether a polynomial inequality
$\varphi_{\ell}(\xx)\geq 0$ is an invariant of an interval hybrid
system, by Theorem~\ref{thm:inv_int} (iii') it suffices to
decide whether a multivariate polynomial $\dot\varphi_\ell(\xx)$
with interval coefficients is
positive semidefinite. In the sequel, we will call
a polynomial with interval coefficients an interval polynomial.
Denote by $\III\RR[\xx]$ the set of interval multivariate
polynomials in $\xx$. The first problem to be investigated is
the following
\begin{problem}\label{prob:1}
Given an interval polynomial $[\psi](\xx)\in\III\RR[\xx]$,
verify whether it is positive semidefinite, or the validity
of the interval inequality
\begin{equation*}
  [\psi](\xx)\geq 0,\,\,\forall \xx\in\RR^n.
\end{equation*}
\end{problem}
It is well known that the
problem of testing
positive semidefiniteness of  real polynomials is
NP-hard (when the degree is at least four). As stated in Appendix B,
a sufficient condition for a multivariate polynomial to be
positive semidefinite
is that there exists an SOS
polynomial (or rational function) representation.
In \cite{KLYZ:08, KYZ09,
peyrl2008computing},
some symbolic-numeric
methods were proposed to determine
whether a multivariate polynomial $\psi(\xx)$ with rational coefficients is positive semidefinite by computing its
exact SOS representations,
or equivalently, to determine if there exists a symmetric matrix
${W}\in \RR^{k\times k}$ satisfying
exactly
\begin{eqnarray}\label{interval:SOS}
             \psi(\xx)=\mm(\xx)^{T}\cdot {W} \cdot \mm(\xx) \text{ and } {W}\succeq 0,
\end{eqnarray}
where $W\succeq 0$ denotes that $W$ is positive semidefinite.
These methods cannot be applied directly to verifying
positive semidefiniteness of an interval polynomial $[\psi](\xx)
\in\III\RR[\xx]$, since there are
infinitely many
polynomials
in the
interval,
and it is impossible to provide certificates of SOS representations for infinitely many polynomials in $[\psi](\xx)$.
For {\bf Problem~\ref{prob:1}}, we will only prove existence of SOS representations for
polynomials in $[\psi](\xx)$.
%
This problem can be further distinguished into two cases
according to the radii of the coefficient intervals:
the coefficient intervals of $[\psi](\xx)\geq 0$ are all smaller
(resp.\ larger) than the given threshold.
%
In the sequel, we will describe how to deal with
the former case,
and
the latter case will be discussed in
subsection~\ref{subsec:large}.

Let $[{W}]$ be an interval matrix such that $[{W}]\succeq 0$, i.e.,
every matrix within $[{W}]$ is positive semidefinite. If for any
polynomial $\psi(\xx)$ within $[\psi](\xx)$, there exists a matrix
${W}\in[{W}]$ such that the condition (\ref{interval:SOS}) holds
exactly, then we have $[\psi](\xx)\geq 0$. Thus the first case of {\bf Problem~1}
can be transformed into the problem of finding an interval matrix
$[{W}] \succeq 0$ for
 $[\psi](\xx)$.

Suppose that there exists an approximate SOS decomposition of the
mid-point function $\text{mid}\psi(\xx)$ $\in[\psi](\xx)$:
\begin{equation}\label{SOS:approx}
\text{mid}\psi(\xx)\approx \mm(\xx)^{T}\cdot \widehat{W} \cdot
\mm(\xx)
\end{equation}
where $\widehat W\succeq 0 $.
Having $\widehat{W}$, we will consider how to compute
an interval
matrix $[{W}]\succeq0$ of minimal radius, 
such that $[W]$ contains $\widehat{W}$ and
for any $\psi(\xx)\in[\psi](\xx)$
there always exists a matrix ${W}\in[{W}]$ satisfying the
condition (\ref{interval:SOS})  exactly.
Considering whether the matrix $\widehat W$ is of full rank, there are two cases to be
addressed.

\subsection{$\widehat{W}$ is of full rank}\label{subsec:fullrank}


Suppose that $\widehat{W}$
in~(\ref{SOS:approx})
 is of full rank {\it numerically},
namely, the minimal eigenvalue of $\widehat{W}$ is greater than the
given tolerance $\tau>0$. Let $$[{W}]:=\widehat{W}+[\Delta W] $$ be
an interval matrix
perturbed  from~$\widehat W$ where $[\Delta
W]\in\III\RR^{k\times k}$.
If, for
any $\psi(\xx)\in[\psi](\xx)$,
there exists a matrix $\Delta W\in[\Delta W]$ which satisfies
\begin{eqnarray}\label{interval:SOS2}
\left.\begin{array}{l@{}l}
             & \psi(\xx)=\mm(\xx)^{T}\cdot (\widehat{W}+\Delta W) \cdot \mm(\xx) ,
\end{array}\right.
\end{eqnarray}
and $\widehat{W}+\Delta W \succeq 0$ exactly, then we have
$[\psi](\xx)\geq 0$.
Since
$\widehat{W}$ is positive definite and of full rank, according to
matrix perturbation theory
we have $\widehat{W}+[\Delta W] \succeq 0$ as long as the radius of
interval matrix $[\Delta W]$ is small enough.

We first consider how to construct an interval matrix $[\Delta W]$
with small radius, which satisfies the condition
(\ref{interval:SOS2}). Comparing the coefficients of terms
on both sides of (\ref{interval:SOS2}) gives rise to the following
underdetermined linear system with the entries of $\Delta W$ as
unknowns $\ww$:
\begin{equation*}\label{system_underdetermined}
A\cdot\ww=[\vv],
\end{equation*}
where
$A\in\RR^{s\times r}$ with $s\in\ZZ^+$ and $r=\frac{k(k+1)}{2}$,
$\ww\in\III\RR^r$ is a vector composed of
columnwise entries of the symmetric matrix $\Delta W$, and
$[\vv]\in\III\RR^s$ is
the coefficient vector of the interval polynomial
$[\psi](\xx)-\mm(\xx)^{T}\cdot \widehat{W} \cdot \mm(\xx)$. Our goal
is to compute a minimal $2-$norm
interval vector $\ww$ satisfying
$A\cdot \ww=[\vv]$.
The above problem is then transformed into the following interval least squares problem:
$$
  \Sigma=\min\{\|\ww\|_2:\,A\cdot\ww=\vv \text{ for some } \vv\in[\vv]\}.
$$
Using the method \cite{rump2010verification} for solving interval
linear systems, we can obtain a solution
$[\ww']\in\III\RR^r$ of $\Sigma$ and
therefore the associated solution $[\Delta W]$ of (\ref{interval:SOS2}) of minimal radius.
Then the
remaining task is to verify
whether the interval matrix
$\widehat{W}+[\Delta W]$ is positive semidefinite.
The following theorem provides
such a computational criterion.
\begin{theorem}\label{thm:positive}
\cite[Theorem 4]{rohn1994positive}
Let $[W]$ be a
symmetric interval matrix and $[W]=[\widehat{W}-\Delta
W, \widehat{W}+\Delta W]$ be its midpoint-radius form.
Suppose that $\rho(\Delta W)$ is the spectral radius of $\Delta W$ and
$\lambda_{min}(\widehat{W})$ is the minimum eigenvalue of
$\widehat{W}$. If $
   \rho(\Delta W)\leq \lambda_{min}(\widehat{W}),$
then $[{W}]$ is positive semidefinite. Moreover, if
$\rho(\Delta W)< \lambda_{min}(\widehat{W})$ then $[{W}]$ is
positive definite.
\end{theorem}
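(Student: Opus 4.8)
The plan is to reduce the matrix statement to a pointwise estimate on quadratic forms. I would fix an arbitrary symmetric matrix $W\in[W]$ and write $W=\widehat{W}+E$, where the perturbation $E=W-\widehat{W}$ satisfies the entrywise bound $|E_{ij}|\le(\Delta W)_{ij}$ for all $i,j$, directly from the definition of the midpoint-radius form. For any vector $x\in\RR^n$ I would split the quadratic form as $x^{T}Wx=x^{T}\widehat{W}x+x^{T}Ex$ and bound the two terms separately.

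For the midpoint term, since $\widehat{W}$ is symmetric, the Rayleigh quotient inequality gives $x^{T}\widehat{W}x\ge\lambda_{\min}(\widehat{W})\,\|x\|_2^{2}$. For the perturbation term, I would pass to entrywise absolute values: using $|E_{ij}|\le(\Delta W)_{ij}$ and writing $|x|$ for the vector whose entries are the moduli of those of $x$, one obtains $|x^{T}Ex|\le\sum_{i,j}|x_i|\,(\Delta W)_{ij}\,|x_j|=|x|^{T}\,\Delta W\,|x|$.

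The crux of the argument is to convert this last bound into the spectral quantity $\rho(\Delta W)$. Here I would invoke the two structural hypotheses on the radius matrix: $\Delta W$ is symmetric (because $[W]$ is a symmetric interval matrix) and entrywise nonnegative (being a radius). By the Perron--Frobenius theorem for symmetric nonnegative matrices, the spectral radius coincides with the largest eigenvalue, $\rho(\Delta W)=\lambda_{\max}(\Delta W)\ge0$. Consequently $|x|^{T}\Delta W\,|x|\le\lambda_{\max}(\Delta W)\,\||x|\|_2^{2}=\rho(\Delta W)\,\|x\|_2^{2}$, using $\||x|\|_2=\|x\|_2$. This is the one genuinely subtle point: we need the \emph{spectral radius}, not merely the largest eigenvalue in modulus, to control the quadratic form, and it is precisely nonnegativity and symmetry of $\Delta W$ that guarantee these agree. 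Without them the estimate would break down, since a general symmetric $\Delta W$ could have $\lambda_{\max}<\rho$.

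Combining the two bounds yields, for every symmetric $W\in[W]$ and every $x$,
\[
x^{T}Wx\ \ge\ \bigl(\lambda_{\min}(\widehat{W})-\rho(\Delta W)\bigr)\,\|x\|_2^{2}.
\]
The conclusions then follow immediately. If $\rho(\Delta W)\le\lambda_{\min}(\widehat{W})$, the right-hand side is nonnegative for all $x$, so each such $W$ is positive semidefinite, and since $W$ was arbitrary in $[W]$ we get $[W]\succeq0$. If the inequality is strict, the right-hand side is strictly positive for $x\neq0$, giving $W\succ0$ for every member of the interval and hence $[W]\succ0$.
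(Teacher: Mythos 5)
Your argument is correct, but note that the paper itself offers no proof of this statement: it is quoted verbatim as Theorem~4 of Rohn \cite{rohn1994positive}, so there is no internal proof to compare against. What you have written is a clean, self-contained version of the standard perturbation argument behind Rohn's sufficient condition: decompose $W=\widehat{W}+E$ with $|E_{ij}|\le(\Delta W)_{ij}$, lower-bound the midpoint quadratic form by Rayleigh, and dominate the perturbation by $|x|^{T}\Delta W\,|x|\le\rho(\Delta W)\|x\|_2^2$. You correctly isolate the one nontrivial step — that symmetry plus entrywise nonnegativity of the radius matrix force $\rho(\Delta W)=\lambda_{\max}(\Delta W)$ via Perron--Frobenius, so the spectral radius genuinely controls the quadratic form; this is equivalent to the Weyl-inequality route ($\|E\|_2\le\|\,|E|\,\|_2\le\rho(\Delta W)$ by monotonicity of the spectral radius on nonnegative matrices) that underlies Rohn's own proof. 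One small improvement: the paper defines $[W]\succeq 0$ as \emph{every} matrix within $[W]$ being positive semidefinite, not only the symmetric ones, so your opening restriction to symmetric $W\in[W]$ is unnecessarily narrow. It is also harmless and easily removed: nothing in your estimate $|x^{T}Ex|\le|x|^{T}\Delta W\,|x|$ uses symmetry of $E$ — only $\widehat{W}$ and $\Delta W$ need to be symmetric, and they are, since $[W]$ is a symmetric interval matrix — so the same chain of inequalities yields $x^{T}Wx\ge\bigl(\lambda_{\min}(\widehat{W})-\rho(\Delta W)\bigr)\|x\|_2^2$ for every $W\in[W]$, symmetric or not, and both conclusions of the theorem follow as you state.
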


We give an example  to illustrate the above method.

\begin{example}
Verify $[\psi](\xx)\geq 0$ where
\begin{multline*}\small
[\psi](\xx)= 0.9574- 1.9362{x_1}- 0.3404{x_2}+[ 1.1852,
 1.2593]x_1^{2}\\-[0.4237,
0.4576]{x_1}\,{x_2}+[ 1.125, 1.2083]x_2^{2}.
\end{multline*}
For the mid-point function $\text{mid}\psi(\xx)$, we compute its
approximate Gram matrix representation
$\text{mid}\psi(\xx)\approx\mm(\xx)^T\cdot \widehat{W}\cdot
\mm(\xx)$ where
\begin{equation*}
 \mm(\xx)=\left( \begin {array}{ccc}  1\\
\noalign{\medskip}x_1\\ \noalign{\medskip}x_2\end {array} \right),
\widehat{W}=\left( \begin {array}{ccc}  0.9574&- 0.9681&- 0.1702\\
\noalign{\medskip}- 0.9681& 1.2222&- 0.2203\\ \noalign{\medskip}-
 0.1702&- 0.2203& 1.1667\end {array} \right).
\end{equation*}
 It is easy to check that $\widehat{W}$ is of full rank. By solving an
associated
interval linear system, we obtain the symmetric interval matrix
$[W]$ as follows:
\begin{equation*}
\left( \begin {array}{ccc} {[0.9574,0.9575] }&{[-0.9681,-0.9680]}&{[-0.1703,-0.1702]}\\
\noalign{\medskip}{[-0.9681,-0.9680]}&{[
1.1851,1.2593]}&{[-0.2289,-0.2118]}
\\ \noalign{\medskip}{[-0.1703,-0.1702]}&{[-0.2289,-0.2118]}&{[1.1388, 1.1945]}\end {array}
 \right).
\end{equation*}
For the midpoint-radius form of $[W]$, we obtain $0.0422=\rho(\Delta
W)< \lambda_{min}(\widehat{W})=0.0461.$ According to
Theorem~\ref{thm:positive}, $[W]$ is positive definitive,
which proves
$[\psi](\xx)\geq0$.
$\hfill \Box$
\end{example}

\subsection{$\widehat{W}$ is singular}\label{subsec:singular}

When the matrix $\widehat{W}$ is singular
or near to a singular matrix, the perturbed matrix of
$\widehat{W}$ may not be positive semidefinite. Therefore,
the
method in subsection~\ref{subsec:fullrank} does not apply to the case where $\widehat{W}$ is numerically singular.

By expanding the quadratic representation, the equation
(\ref{SOS:approx}) can be rewritten as
\begin{equation*}
\text{mid}\psi(\xx)\approx \displaystyle{\sum_{i=1}^{l}
\bigg(\sum_{\alpha}} \hat{q}_{i,\alpha} \xx^{\alpha}\bigg)^2,
\end{equation*}
where $l$ is the rank of $\widehat{W}$.
Next we will verify, for each
$\psi(\xx)\in[\psi](\xx)$,
there exist $q_{i,\alpha}\in\RR$
such that
\begin{equation}\label{constant_sos}
\psi(\xx)= \displaystyle{\sum_{i=1}^{k} \bigg(\sum_{\alpha}}
q_{i,\alpha} \xx^{\alpha}\bigg)^2
\end{equation}
holds exactly. Let $\qq$ be a vector composed of all the
$q_{i,\alpha}$.
Comparing the terms of both sides of (\ref{constant_sos})
gives rise to a nonlinear system
of the form
\begin{equation}\label{int:underdetermined1}
   F(\qq)-[\vv]=0,
\end{equation}
where $F:\RR^r\rightarrow\RR^s$ with $r$ the size of $\qq$, and
$[\vv]\in\III\RR^s$ is an interval vector consisting of coefficients
in $[\psi](\xx)$. Note that $F(\mathbf{0})=\mathbf 0$. Hence, the problem
of determining $[\psi](\xx)\geq 0$
is
equivalent to that of verifying existence of real roots
of the underdetermined interval nonlinear system~(\ref{int:underdetermined1}).
The latter problem can be solved
in  two ways:
one is
based on
existence of real roots for particular interval square nonlinear systems,
and the other for particular interval underdetermined  nonlinear systems.
The details of these two methods are given in  Appendix C.

\begin{remark}
If we find  a verified  real solution to system (\ref{int:underdetermined1}),
 then
$\psi(\xx)\geq 0$ for each $\psi(\xx) \in [\psi](\xx)$. However, the
opposite is not true, i.e., even if $[\psi](\xx)\geq 0$ it is not
guaranteed that the above methods can prove existence of real roots
of (\ref{int:underdetermined1}).
\end{remark}

\section{Safety Verification of Interval Hybrid Systems}\label{sect:Interval}

In this section, we study how to verify safe properties of
an interval hybrid system.
Two techniques will be used depending on the radii of the occurred
intervals in the given interval
hybrid system. If the radii of the intervals are all larger than a given threshold, we transform the
interval hybrid system into an uncertain hybrid system by
replacing the intervals with some uncertainties and then generalize the method
in~\cite{LWYZ11,yang2012exact}, which is based on SOS relaxation and rational
vector recovery, to
compute exact invariants of the uncertain hybrid system. If the
radii of the involved intervals are all less than the given
threshold, we will apply  the interval verification approach in
Section~\ref{sec:Nonnegativity}. For the more general case, when the
interval hybrid system contains both intervals of radii smaller than
and those of radii larger than the given threshold, the above two
techniques
will be combined.
For
simplification,
we will only consider the two special cases respectively in subsections \ref{subsec:large} and \ref{subsec:small}.

\subsection{Safety Verification of Interval Hybrid Systems With Large Radii 
}\label{subsec:large}
Let $\I\HH: \langle V, L, {\mathcal T},  { \Theta},  [\mathcal D], \Psi,  \ell_0\rangle$ be an interval hybrid system.
Suppose that the radii of the intervals in
the interval differential rules $[\mathcal D]$
 are all greater than a given threshold $\epsilon$,
say $\epsilon=0.1$. Then
some new parameters $u_1, \dots, u_t$ will be introduced to replace
the interval coefficients,
to convert $\I\HH$ into an uncertain hybrid system $\HH_{\uu}$
with $\uu =(u_1, \dots, u_t)$, for which Theorem~\ref{thm:location_inv}
can be extended to handle safety
verification.

Denote by $[\uu]=[\underline{\uu}, \overline{\uu}]\in\III\RR^t$ the
interval coefficient vector composed of all the interval
coefficients occurred in~$[\mathcal D]$, where
$\underline\uu=(\underline u_1, \dots,\underline u_t)$ and
$\overline\uu=(\overline u_1, \dots,\overline u_t)$.
To remove the intervals $[\uu]$ in $\I\HH$,
we introduce a vector
$\uu\in\RR^s$ of uncertainties with the constraints
$$\vartheta_i(\uu)=(u_i-\underline{u}_i)(\overline{u}_i-u_i)
\geq 0,\quad i=1,\dots, t.$$

For the uncertain
hybrid system $\HH_\uu$, we predetermine a template
$\varphi(\xx)=\sum_{\alpha} c_{\alpha}\xx^{\alpha}
$ of polynomial invariants
with the given degree $d$,
where~$\xx^{\alpha}=x_{1}^{\alpha_{1}}\cdots x_{n}^{\alpha_{n}}$,
$\alpha=(\alpha_{1},\ldots,\alpha_{n}) \in {\ZZ}_{\geq 0}^{n}$ with
$\sum_{i=1}^{n} \alpha_{i} \leq d$, and $c_{\alpha} \in \RR$ are
parameters.
For each location $\ell\in L$, we write $\varphi_{\ell}(\xx) = \cc_{\ell}^{T}\cdot
T(\xx)$, where  $T(\xx)$ is the (column) vector of all
terms in $x_1,\dots, x_n$ with total degree $\le d$, and $\cc_{\ell}
\in \RR^{\nu}$, with $\nu={n+d \choose n}$,  is the coefficient
vector of  $\varphi_{\ell}(\xx)$.
For clarity, we write $\varphi_{\ell}(\xx)$ as
$\varphi_{\ell}(\xx,\cc_\ell)$.
Similar to Theorem \ref{thm:location_inv}, the problem of computing
the invariants $\varphi_{\ell}(\xx)$ of the uncertain hybrid system
$\HH_\uu$ can be translated into the following problem
\begin{eqnarray}\label{problem_uncertain}
\left\{\begin{array}{l@{}l}
    &\text{find}\  \cc_{\ell}\in \RR^{\nu},\quad
\forall \ell\in L \\
                  & \text{s.t.}
                  \,\, \Theta \models \varphi_{\ell_0}(\xx,\cc_{\ell_0}) \geq 0,\\
                  & \quad\varphi_\ell(\xx,\cc_{\ell})\geq 0  \wedge g(\ell, \ell') \wedge \rho(\ell,\ell')\models \varphi_{\ell'}(\xx',\cc_{\ell'})\geq 0,  \\
                  &\quad \varphi_\ell(\xx,\cc_{\ell})\geq0 \wedge\Psi(\ell)\wedge \vartheta(\uu)\geq0\models \dot{\varphi}_\ell(\xx,\uu,\cc_{\ell}) > 0, \\
                  &\quad X_u(\ell)\models \varphi_\ell(\xx,\cc_{\ell}) < 0,
\end{array}\right.
\end{eqnarray}
where
$\dot{\varphi_{\ell}}(\xx,\uu,\cc_{\ell})=\sum_{i=1}^n\frac{\partial\varphi_{\ell}}{\partial
x_i} \cdot f_{\ell,i}(\xx,\uu)$.
%
Without loss of generality,
we consider a simpler form of (\ref{problem_uncertain}):
\begin{equation}\label{uncertain:reduced}
 \left \{
\begin{array}{l@{}l} \displaystyle
\text{find} & \quad  \cc \in \RR^{\nu} \\
 \text{s.t.} & \quad \,\,\,  \varphi_1(\xx,\cc)\geq0,   \\
 & \quad \,\,\,
 \varphi_{3}(\xx,\cc)\geq0\models\varphi_{2}(\xx,\cc)\geq0,\\
 & \quad \,\,\,
 \varphi_{5}(\xx,\uu,\cc)\geq0\models\varphi_{4}(\xx,\uu,\cc)\geq0,
\end{array}\right.
\end{equation}
where the coefficients of the polynomials $\varphi_{i}$'s are affine in $\cc$, for $i=1,\dots, 5$.
By Appendix B,
the problem (\ref{uncertain:reduced}) can be further transformed
into the following polynomial parametric optimization problem
  \begin{equation}\label{SOS:uncertain}
 \left \{
\begin{array}{l@{}l} \displaystyle
\text{find} & \,\,  \cc \in \RR^{\nu} \\
 \text{s.t.} & \,\,   \varphi_1(\xx,\cc)=
\mm_{1}(\xx)^{T} \cdot W^{[1]} \cdot  \mm_{1}(\xx),   \\
 & \,\,   \varphi_{2}(\xx,\cc)=
\mm_{2}(\xx)^{T}\cdot W^{[2]} \cdot \mm_{2}(\xx)
+(\mm_{3}(\xx)^{T}\cdot W^{[3]}\cdot \mm_{3}(\xx))\cdot
\varphi_{3}(\xx,\cc),\\
& \,\,   \varphi_{4}(\xx,\uu,\cc)= \mm_{4}(\xx,\uu)^{T}\cdot W^{[4]}
\cdot \mm_{4}(\xx,\uu) + (\mm_{5}(\xx,\uu)^{T}\cdot W^{[5]}\cdot
\mm_{5}(\xx,\uu))\cdot \varphi_{5}(\xx,\uu,\cc),
\\ & \,\,   W^{[i]} \succeq 0,  \quad i=1,\dots, 5,
\end{array}\right.
\end{equation}
which involves both LMI and BMI constraints. As stated in
\cite{yang2012exact}, a Matlab package PENBMI solver~\cite{KS2005},
which combines the (exterior) penalty and (interior) barrier method
with the augmented Lagrangian method, can be applied directly on the
BMI program, and alternatively, an iterative method can be applied
by fixing $W^{[5]}$ and $\cc$ alternatively, which leads to a
sequential convex LMI problem.




Since the SDP solvers in Matlab is running in fixed precision, the
above techniques yield numerical vector $\cc$ and numerical positive
semidefinite matrices $W^{[1]},\dots, W^{[5]}$, which satisfy the
constraints in~(\ref{SOS:uncertain}) {\em approximately}.
We will  apply the
symbolic-numeric
method proposed in \cite{yang2012exact} to obtain exact solutions to
(\ref{SOS:uncertain}). The idea is as follows. We first convert $W^{[3]}$ and $W^{[5]}$ to the
nearby rational positive semidefinite matrices $\widetilde{W}^{[3]}$ and
$\widetilde{W}^{[5]}$, respectively, by nonnegative truncated \ldl-decomposition,
in which all the diagonal entries of the corresponding diagonal
matrix are preserved to be nonnegative. Then, using modified
Newton refinement and rational vector recovery techniques, we can
recover the
rational vector $\tilde \cc$ and the rational positive semidefinite
matrices $\widetilde{W}^{[1]}$, $\widetilde{W}^{[2]}$,
$\widetilde{W}^{[4]}$  from the numerical $\cc, W^{[1]}, W^{[2]}, W^{[4]}$, respectively, such that the constraints in
(\ref{SOS:uncertain}) hold exactly. For more details, please refer to \cite{yang2012exact}.

\subsection{Safety Verification of Interval Hybrid systems with Small
Radii 
}\label{subsec:small}

In this subsection, we will consider
interval hybrid systems with small radii interval coefficients,
namely, the radii of the involved intervals are all smaller than the given
threshold $\epsilon$. For such interval hybrid systems, the method
described in subsection~\ref{subsec:large}
via introducing uncertainties
may suffer from high complexity
especially when solving the parametric
optimization problem (\ref{SOS:uncertain}).
%
Instead, we will consider how
to generate invariants
of $\I\HH$
by
determining
nonnegativity of
interval polynomials:
we first compute
candidate invariants with rational coefficients,
then
employ the interval computation method presented in
Section~\ref{sec:Nonnegativity} to certify that the candidate
invariants satisfy the conditions in Theorem~\ref{thm:inv_int}
exactly.

Suppose that $[D](\ell)$ of $\I \HH$ is
given by $\dot \xx=[\ff_{\ell}](\xx)$ for $\ell\in L$.
Choosing the midpoints of the interval coefficients of $[\ff_{\ell}](\xx)$ yields a mid-point vector
$\text{mid}\ff_{\ell}(\xx)$
and an associated general hybrid system $\HH$ with the vector field
$\dot\xx =\text{mid}\ff_\ell(\xx)$, for $\ell\in L$.
Then the symbolic-numeric technique in \cite{yang2012exact} can be used
to generate invariants of $\HH$ as follows.
Let us predetermine a polynomial template $\varphi_{\ell}(\xx)\ge 0$ of invariants of $\HH$ with $\deg \varphi_{\ell}(\xx)=d$.
By Theorem~\ref{thm:location_inv}, the problem of computing
$\varphi_{\ell}(\xx)$
can be translated into the following
problem
\begin{eqnarray}\label{problem_mid}
\left\{\begin{array}{l@{}l}
   & \text{find}\  \cc_{\ell}\in \RR^{\nu},\quad
\forall \ell\in L \\
                  & \text{s.t.}
                   \Theta \models \varphi_{\ell_0}(\xx,\cc_{\ell_0}) \geq 0,\\
                  & \quad\varphi_\ell(\xx,\cc_{\ell})\geq 0  \wedge g(\ell, \ell') \wedge \rho(\ell,\ell')\models \varphi_{\ell'}(\xx',\cc_{\ell'})\geq 0,  \\
                  &\quad \varphi_\ell(\xx,\cc_{\ell})\geq0 \wedge\Psi(\ell)\models \text{mid}\dot{\varphi}_\ell(\xx,\cc_{\ell}) > 0, \\
                  &\quad X_u(\ell)\models \varphi_\ell(\xx,\cc_{\ell}) < 0,
\end{array}\right.
\end{eqnarray}
where
$\text{mid}\dot{\varphi_{\ell}}(\xx,\cc_{\ell})=\sum_{i=1}^n\frac{\partial\varphi_{\ell}}{\partial
x_i} \cdot \text{mid}f_{\ell,i}(\xx)$. By use of BMI solving and
modified Newton refinement, we can obtain the refined numerical
solutions to (\ref{problem_mid}).
With the refined vector $\cc_{\ell}$ for $\ell\in L$, we
then apply rational vector recovery technique to obtain
a polynomial
${\varphi}_{\ell}(\xx,\tilde{\cc}_{\ell})$ with rational
coefficients.
Clearly, ${\varphi}_{\ell}(\xx,\tilde{\cc}_{\ell})$ can be seen as a candidate invariant of the interval hybrid system $\I \HH$.

In the following, we will determine whether
${\varphi}_{\ell}(\xx,\tilde{\cc}_{\ell})$ satisfies the
conditions of invariants of interval hybrid system $\I\HH$ in
Theorem \ref{thm:inv_int} exactly, i.e.,
\begin{eqnarray}\label{verification:interval}
\left\{\begin{array}{l@{}l}
                  & \Theta \models {\varphi}_{\ell_0}(\xx,\tilde{\cc}_{\ell_0}) \geq 0,\\
                  & {\varphi}_\ell(\xx,\tilde{\cc}_{\ell})\geq 0  \wedge g(\ell, \ell') \wedge \rho(\ell,\ell')\models {\varphi}_{\ell'}(\xx',\tilde{\cc}_{\ell'})\geq 0,  \\
                  & {\varphi}_\ell(\xx,\tilde{\cc}_{\ell})\geq0 \wedge\Psi(\ell)\models [\dot{{\varphi}}_\ell](\xx,\tilde{\cc}_{\ell}) > 0, \\
                  & X_u(\ell)\models {\varphi}_\ell(\xx,\tilde{\cc}_{\ell}) < 0,
\end{array}\right.
\end{eqnarray}
where
$[\dot{{\varphi}}_\ell](\xx,\tilde{\cc}_{\ell})=\sum_{i=1}^n\frac{\partial{\varphi}_{\ell}}{\partial
x_i} \cdot [f_{\ell,i}](\xx)$ is an interval polynomial.
Observing
in (\ref{verification:interval}),
all the constraints except the third one are exact constraints.
And the SOS-based method presented in subsection~\ref{subsec:large} can be used
to determine satisfiability of the exact constraints.
%
To handle the third constraint in (\ref{verification:interval}),
we now consider how to determine satisfiability of polynomial inequalities with interval coefficients.
More generally, we consider the following problem
\begin{equation}\label{interval:reduced}
  \psi_1(\xx,\tilde{\cc})\geq0\models[\psi_2](\xx,\tilde{\cc})\geq0,
\end{equation}
where $[\psi_2](\xx,\tilde{\cc})\in \III\RR[\xx]$.
Let $\text{mid}\psi(\xx,\tilde{\cc})\in[\psi_2](\xx,\tilde{\cc})$ be the mid-point function of $[\psi_2](\xx,\tilde{\cc})$. Then
BMI solver and modified Gauss-Newton refinement can yield the numerical positive semidefinite matrices
$W^{[1]}$ and $W^{[2]}$, which satisfy the following
condition approximately
\begin{equation}\label{interval:approx}
\left.\begin{array}{l@{}l} &\text{mid}\psi(\xx,\tilde{\cc})\approx
\mm_{2}(\xx)^{T}\cdot W^{[2]} \cdot \mm_{2}(\xx) +
(\mm_{1}(\xx)^{T}\cdot W^{[1]}\cdot \mm_{1}(\xx))\cdot
\psi_1(\xx,\tilde{\cc}).
\end{array}\right.
\end{equation}
Converting
${W}^{[1]}$ to a nearby rational positive
semidefinite matrix $\widetilde{W}^{[1]}$ by nonnegative truncated
\ldl-decomposition,
the condition
(\ref{interval:approx}) becomes
\begin{equation}\label{inter_evo}
\left.\begin{array}{l@{}l} &\text{mid}\psi(\xx,\tilde{\cc})
-(\mm_{1}(\xx)^{T}\cdot \widetilde{W}^{[1]}\cdot \mm_{1}(\xx))\cdot
\psi_1(\xx,\tilde{\cc})\approx \mm_{2}(\xx)^{T}\cdot W^{[2]} \cdot
\mm_{2}(\xx).
\end{array}\right.
\end{equation}
Let $[\tilde{\psi_2}](\xx, \tilde \cc)$ be an interval polynomial such that
$$[\tilde{\psi_2}](\xx,\tilde{\cc})=[\psi_2](\xx,\tilde{\cc})
-(\mm_{1}(\xx)^{T}\cdot \widetilde{W}^{[1]}\cdot \mm_{1}(\xx))\cdot \psi_1(\xx,\tilde{\cc}).$$
Since $\widetilde{W}^{[1]}\succeq 0$,
it suffices to prove satisfiability of (\ref{interval:reduced}) when $[\tilde{\psi}_2](\xx,\tilde{\cc})$ is nonnegative.
Remark that (\ref{inter_evo}) is an approximate SOS decomposition of
$[\tilde{\psi_2}](\xx,\tilde{\cc})$. The nonnegativity of
$[\psi_{2}]$
can be verified by computing the corresponding
interval matrix $[W_{2}]$, either using the method in
subsection~\ref{subsec:fullrank} if $W^{[2]}$ is of full rank, or
by proving existences of real roots of the interval
nonlinear system, as explained in subsection~\ref{subsec:singular}.

\subsection{Experiments}\label{algorithm}

In the following, some examples will be given to illustrate our
method on safety verification of interval hybrid systems.

\begin{example}\label{ex2}
Consider the classical two-dimensional system given in
\cite{khalil1996nonlinear,prajna2007safety}, whose coefficients are
approximated and described by the following intervals
\begin{equation*}
\left[\begin{array}{ccc}\dot{x}_1\\\dot{x}_2\end{array}\right]=\left[
\begin{array}{ccc}[0.99,1.01]x_2 \\ -[0.96,1.04]x_1+[0.32,0.347]x_1^3-[0.98,1.02]x_2\end{array}\right].
\end{equation*}
We will verify that all trajectories of the
system starting from the initial set
$$\Theta=\{(x_1,x_2)\in \RR^2:
(x_1-1.5)^2+x_2^2\leq0.25\}$$
 will never enter the
unsafe region
$$X_u=\{(x_1,x_2)\in \RR^2: (x_1+1)^2+(x_2+1)^2\leq0.16 \}.$$

%

Set the threshold $\epsilon=0.1$
Clearly, all the radii of involved intervals are less than this
threshold.
Applying the  method in subsection \ref{subsec:small},
we obtain the following verified invariant with
rational coefficients
$$
   \widetilde{\varphi}(\xx)={\frac {151}{99}}+{\frac {152}{99}}\,x_1+{\frac {62}{33}}
\,x_2+{\frac {106}{99}}\,x_1x_2+{\frac {4}{9}}\,x_1^{2},
$$
which guarantees the safety of the original system. $\hfill \Box$
\end{example}

\begin{example}
Consider a Moore-Greitzer model of a jet engine with stabilizing
feedback operating in the no-stall mode \cite{aylward2008stability}.
In this model, the origin is translated to a desired no-stall
equilibrium. The dynamic system takes the following form:
\begin{eqnarray*}\label{ex_1}
\left\{\begin{array}{l@{}l}
   \dot{x}_1={[-1.1,-0.9]}x_2-\frac{3}{2}x_1^2-\frac{1}{2}x_1^3, \\
   \dot{x}_2={[2.98,3.02]}x_1-x_2.
   \end{array}\right.
\end{eqnarray*}
The problem is to verify that all trajectories of the system
starting from the initial set
$$\Theta=\{(x_1,x_2)\in \RR^2:
(x_1-1)^2+x_2^2\leq 0.04\}
$$
will never reach the unsafe set
$$X_u=\{(x_1,x_2)\in \RR^2:(x_1+1.8)^2+x_2^2\leq 0.16\}.$$

Set the threshold $\epsilon$ of radii to be $0.1$. Then a new
uncertainty $u$ is introduced to replace the interval $[-1.1,-0.9]$.
Combine the methods in subsections \ref{subsec:large} and \ref{subsec:small} to deal with the uncertain interval
system, and we obtain the following verified invariant with rational
coefficients
$$   \widetilde{\varphi}(x_1,x_2)={\frac {2231}{328}}+{\frac
{652}{123}}x_1+{\frac {274}{123}}x_2-{ \frac {46}{41}}x_1^{2}+{\frac
{10}{41}} x_1x_2+{\frac {1649}{984}}x_2^{2},
$$
which guarantees the safety of the original system. $\hfill \Box$
\end{example}

\begin{example}
Figure \ref{hybrid2} gives a predator-prey hybrid system \cite{ratschan2005safety}
with interval coefficients:
\begin{eqnarray*}\label{ex_2}
f_1(\xx)=f_2(\xx)=\left[
\begin{array}{lll}-x_1+[0.99,1.01]x_1x_2 \\
{[0.875,1.2]}x_2-x_1x_2\end{array}\right].
\end{eqnarray*}
Suppose the system starts in
location $\ell_1$ with an initial state in
$$\Theta=\{(x_1,x_2)\in \RR^2: (x_1- 0.8)^2 +(x_2- 0.2)^2 \leq
0.01\}.$$
\begin{figure}[ht]\label{hybrid2}
\centering
\includegraphics[width=0.65\textwidth]{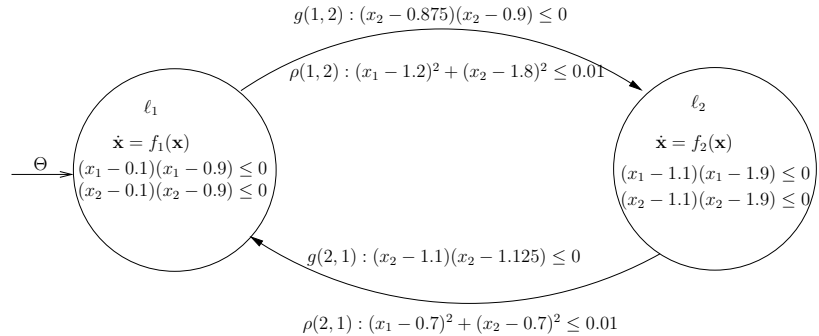} \caption{\small Hybrid system
of Example 4}\label{C5Fig1}
\end{figure}
We want to  verify that the system never reach the states
in
$$X_u(\ell_1)=\{(x_1,x_2)\in \RR^2: 0.8 \leq x_1 \leq 0.9 \wedge 0.8
\leq x_2 \leq 0.9 \}.$$

Set the threshold $\epsilon$ of radii to be  $0.1$. Then a new
uncertainty $u$ is introduced to replace the interval $[0.875,1.2]$.
Applying the above method 
on the resulting uncertain interval hybrid system $\I\HH_u$, we
obtain the following verified invariants with rational coefficients
\begin{eqnarray*}
    &&\widetilde{\varphi}_{1}(x_1,x_2)=-{\frac {411}{995}}+{\frac {346}{995}}\,{x_1}+{\frac {397}{995}}\,{
x_2}-{\frac {49}{199}}\,x_2^{2}, \\
    &&\widetilde{\varphi}_{2}(x_1,x_2)={\frac {556}{995}}-{\frac {151}{199}}\,{x_1}-{\frac {986}{995}}\,{
x_2}+{\frac {22}{995}}\,x_2^{2},
\end{eqnarray*}
for locations $\ell_1$ and $\ell_2$, respectively, which ensures the safety of the
original hybrid system. $\hfill \Box$
\end{example}

\section{Safety Verification of Non-polynomial Hybrid system}\label{nonpolynomial}
As an application of the method in Section~\ref{sect:Interval} for
safety verification for interval hybrid systems, we will consider
how to verify safety of {\em non-polynomial hybrid systems}.

Let
$\HH: \langle V, L, {\mathcal T}, { \Theta}, \mathcal D,
\Psi,  \ell_0\rangle
$
be a hybrid system where the initial condition $\Theta$,
location invariants~$\Psi(\ell)$, the guard condition and reset
relation in each transition $\tau \in \mathcal T$ are semialgebraic
sets, whereas the
continuous systems in the
differential rules $\mathcal D(\ell)$,
contain
some non-polynomial terms
in $\xx$.
%
For such a
non-polynomial hybrid system $\HH$, we will
first transform it into an uncertain interval hybrid system $\I\HH$
through polynomial approximation on the non-polynomial terms, such
that $\HH$ is within $\I\HH$.
This implies that the safety of $\I\HH$ suffices to guarantee the
safety of $\HH$, and then the method in Section~\ref{sect:Interval} can
be applied to the former problem.

%

Assume that the location invariant
$\Psi(\ell)$ is a compact set for each location $\ell$.
Consider the continuous dynamics of a hybrid system $\bf H$ at location $\ell$:
\begin{equation}\label{eq:nonpolynomial}
  \dot{x}_i=f_i(\xx)=f_{i0}(\xx)+\sum_{j=1}^s f_{ij}(\xx)\phi_{ij}(\xx),\,\, i=1,\dots, n,
\end{equation}
where $\xx$ takes values in $\Psi(\ell)\subseteq\RR^n$,
$f_{ij}(\xx)$ are polynomials for~$j=0,1,\dots,s$, and
$\phi_{ij}(\xx)$ are non-polynomials for $j=1,\dots, s$.
We will approximate the functions
$\phi_{ij}(\xx)$ with polynomials
$g_{ij}(\xx)$ for~$i=1,\dots, n$ and $j=1,\dots, s$.
Let $\mu_{ij}$ be the bound of $|\phi_{ij}(\xx)-g_{ij}(\xx)|$ for all
$\xx\in\Psi$, namely,
\begin{equation}\label{non_polynomial_bound}
|\phi_{ij}(\xx)-g_{ij}(\xx)|\leq \mu_{ij}, \quad\mbox{for all } x
\in \Psi(\ell).
\end{equation}
Making use of the relation~(\ref{non_polynomial_bound}) for each
location $\ell\in L$, we can construct an interval polynomial hybrid
system $\I\HH: \langle V, L, {\mathcal T}, { \Theta}, [\mathcal D],
\Psi,  \ell_0\rangle$,
where the interval differential rule $[\mathcal D(\ell)]$ given by
\begin{equation}\label{EQ:polynomial}
  \dot{x}_i=[f_i](\xx)=f_{i0}(\xx)+\sum_{j=1}^s f_{ij}(\xx)(g_{ij}(\xx)+[-\mu_{ij},\mu_{ij}]),\,i=1,\dots, n,
\end{equation}
enclosures the non-polynomial system (\ref{eq:nonpolynomial}) in
$\HH$,
that is, $f_{i}(\xx) \in [f_{i}](\xx)$ for all $\xx\in \Psi(\ell).$


The key point of the above idea is to compute
an approximate polynomial and the associated bound  for the given non-polynomial function. For
a non-polynomial function $\phi(\xx)$ with $\xx \in \Psi(\ell)$, we will compute the approximate polynomial $g(\xx) \in
\RR[\xx]$ with a verified bound $\mu \in \RR_{+}$, such that
   $$|\phi(\xx)-g(\xx)|<\mu, \forall x \in \Psi(\ell),$$
and the bound $\mu$ is as small as possible.

A classic method of polynomial approximation is Taylor expansion. In
this paper, to obtain a tighter error bound, multivariate polynomial
interpolation\cite{gasca2000history} is applied to compute an
approximate polynomial with the error bound.
Furthermore, the
technique of oversampling is explored to get better approximate
polynomials, i.e., the number of the interpolation points is
greater than that of the terms of the target polynomial
$g(\xx)$.
Given the interpolation points,
the approximate polynomial $g(\xx)$ can be obtained by solving a least squares problem.
Specifically, predetermine a polynomial template of $g(\xx)$ with a
given degree $r$:
  \begin{equation}\label{g_form}
     g(\xx) = \cc^{T}\cdot T(\xx),
    \end{equation}
where $T(\xx)$ is the (column) vector consisting of all terms in
$x_1,\dots, x_n$ with total degree $\le r$, and $\cc \in \RR^{\nu}$,
with $\nu={n+r \choose n}$, is the coefficient vector of $g(\xx)$.
We then construct a mesh $M$ on $\Psi(\ell)$ with a small spacing
$s\in\RR_+$, and compute $y_j=\phi(\vv_j) \in \RR$ for $1\leq j\leq
m$ at mesh points $\{\vv_1, \vv_2, ..., \vv_m\}$.
Let the coefficient vector $\cc$ of $g(\xx)$ be
unknowns. We can construct a linear system
\begin{equation}\label{system_overdetermined}
   A\cdot\cc=\yy,
\end{equation}
where $A=(T(\vv_1)^T,T(\vv_2)^T,...,T(\vv_m)^T)^T$ is of size ${m\times\nu}$ with $m>\nu$. By
solving the above overdetermined system, we obtain $g(\xx,\cc)$ as the approximation of $\phi(\xx)$ with $x\in \Psi(\ell)$.
Having
$g(\xx,\cc)$, one will compute  the verified  error bound $\mu$,
namely,
  $|\phi(\xx)-g(\xx,\cc)|<\mu, \forall \xx \in \Psi(\ell).$

\begin{lemma}\label{thm:Zeng}
\cite[Theorem 3]{zeng2010} Let $K\subset\RR^n$ be a convex
polyhedron, and $V_1,V_2,...V_m$ and $d$ be the vertices and
diameter of $K$ respectively. Suppose $\psi: K\rightarrow\RR$ is a
continuous and differential function on $K$, then for all
$a_1,a_2,...a_m\in\RR_+$ such that $ a_1+a_2+...+a_m=1$, we have
$$
  |\psi(\xx)-(a_1\psi(V_1)+a_2\psi(V_2)+...+a_m\psi(V_m))|\leq \frac{n}{n+1}\beta d,
$$
where
$\beta=\sup_{\xx\in K} \|\bigtriangledown \psi(\xx)\|.$
\end{lemma}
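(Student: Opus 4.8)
The plan is to derive the estimate as a direct consequence of the mean value inequality applied to the convex combination $\sum_j a_j \psi(V_j)$, exploiting that every point of $K$ lies within distance $d$ of each vertex. First I would fix an arbitrary $\xx \in K$ and write the quantity to be bounded as a convex combination of increments:
$$
  \psi(\xx) - \sum_{j=1}^m a_j \psi(V_j) = \sum_{j=1}^m a_j\bigl(\psi(\xx)-\psi(V_j)\bigr),
$$
which is legitimate precisely because the $a_j$ are nonnegative and sum to $1$. By the triangle inequality it then suffices to bound each term $|\psi(\xx)-\psi(V_j)|$.

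For each individual increment I would invoke the mean value inequality for differentiable functions on the convex set $K$: since the segment from $V_j$ to $\xx$ lies in $K$ (convexity), there is a point $\xi_j$ on that segment with
$$
  |\psi(\xx)-\psi(V_j)| \le \|\nabla\psi(\xi_j)\|\,\|\xx-V_j\| \le \beta\, d,
$$
using $\|\nabla\psi(\xi_j)\|\le\beta=\sup_{\xx\in K}\|\nabla\psi(\xx)\|$ and $\|\xx-V_j\|\le d$ by the definition of the diameter. Summing with weights $a_j$ and using $\sum_j a_j=1$ gives the crude bound $\beta d$. The remaining task is to sharpen the constant $\beta d$ to $\frac{n}{n+1}\beta d$, and this is where the real content lies.

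The hard part will be recovering the factor $\frac{n}{n+1}$, which cannot come from bounding each $\|\xx-V_j\|$ separately by $d$, since that only yields the constant $1$. To gain the improvement I would instead estimate the \emph{weighted average} displacement $\sum_j a_j\|\xx-V_j\|$ directly. The key geometric fact is that for $\xx$ ranging over a simplex (or convex polyhedron) with vertices $V_j$ and barycentric-type weights $a_j$, the optimal placement maximizing this weighted average occurs at a vertex, and a centroid-versus-vertex comparison over an $n$-dimensional simplex produces exactly the ratio $\frac{n}{n+1}$ (the centroid divides each median in that proportion). Concretely, I would bound $\sum_j a_j\|\xx-V_j\|$ by $\frac{n}{n+1}d$ using the fact that the farthest a point of an $n$-simplex can be, on weighted average, from its own vertices is governed by this median ratio, and then combine with the uniform gradient bound $\beta$ to conclude. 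I would expect to cite the underlying geometric inequality from the referenced source \cite{zeng2010} rather than rederive the extremal simplex computation, since that dimension-dependent constant is the technical heart and is most cleanly obtained by the argument already recorded there.
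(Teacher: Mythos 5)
There is nothing in the paper to match your argument against: Lemma~\ref{thm:Zeng} is imported verbatim from \cite[Theorem 3]{zeng2010} with no proof given, and your closing move --- citing that source for the factor $\frac{n}{n+1}$ --- is in effect exactly what the paper does. So the only substantive content to assess is your proposed reduction, and it contains a genuine gap. The decomposition $\psi(\xx)-\sum_j a_j\psi(V_j)=\sum_j a_j\bigl(\psi(\xx)-\psi(V_j)\bigr)$ plus the mean value inequality is fine and yields the crude bound $\beta d$; the sharpening step is where it breaks. As the lemma is transcribed (and as you read it), $\xx$ and the weights $a_j$ are unrelated, and under that reading your key intermediate claim $\sum_j a_j\|\xx-V_j\|\le\frac{n}{n+1}d$ is simply false: take $\xx=V_1$ and put almost all the weight on a vertex $V_2$ realizing the diameter, so the weighted average tends to $d$. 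Indeed the lemma itself fails under this literal reading: on $K=[0,1]$ with $\psi(x)=x$, $\beta=d=1$, $\xx=0$, $a_2=1-\varepsilon$, the left side is $1-\varepsilon>\frac12=\frac{n}{n+1}d$. The hypothesis missing from the transcription, and implicitly presupposed by your phrase ``barycentric-type weights,'' is that the $a_j$ are barycentric coordinates of $\xx$, i.e.\ $\xx=\sum_j a_jV_j$; your setup, which fixes $\xx$ and the weights independently, discards precisely this coupling, and without it no argument can succeed because the statement is false. Note also that your heuristic ``the weighted average is maximized at a vertex'' works against you, not for you: under your unconstrained reading it is true (the map $\xx\mapsto\sum_j a_j\|\xx-V_j\|$ is convex, hence maximized at an extreme point) and shows the maximum can equal $d$; under the corrected, constrained reading it is backwards, since at a vertex the constrained weighted average is $0$ and the extremal configuration is interior and centroid-like --- which is exactly where the dimension-dependent ratio $\frac{n}{n+1}$ originates (for $n=1$ the constrained inequality reads $2a(1-a)\le\frac12$, with equality at the midpoint).

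To repair the proposal you would need to (i) restate the lemma with the hypothesis $\xx=\sum_j a_jV_j$, and (ii) actually prove the constrained weighted-distance inequality $\sum_j a_j\|\xx-V_j\|\le\frac{n}{n+1}d$, which is the technical heart; deferring that to \cite{zeng2010}, as you propose, is legitimate but then your proof collapses to the paper's own bare citation, with the surrounding reduction adding nothing beyond the trivial $\beta d$ bound.
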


For the error function $r(\xx)=\phi(\xx)-g(\xx,{\cc})$, we will
estimate its bound with $\xx$ in the mesh $M$ by the following
theorem.

\begin{theorem}\label{thm:bound_error}
Suppose that $s$ and $\{\vv_1, \vv_2, ..., \vv_m\}$ are the mesh
spacing and mesh points of $M$, respectively. Let
$\mu_0=\max\{r(\vv_1),r(\vv_2),...,r(\vv_m)\}$, and
$\beta'=\sup_{\xx\in M} \|\bigtriangledown r(\xx)\|,$ then for all
$\xx\in M$,
$$
  |r(\xx)|\leq \frac{n}{n+1}\beta's+\mu_0. 
$$
\end{theorem}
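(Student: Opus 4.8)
The plan is to view the mesh $M$ as a union of convex cells (e.g.\ simplices) whose vertices are mesh points, to locate an arbitrary $\xx\in M$ inside one such cell, and then to invoke Lemma~\ref{thm:Zeng} on that cell with the error function $r$ playing the role of $\psi$. First I would note that $r(\xx)=\phi(\xx)-g(\xx,\cc)$ is continuous and differentiable on $M$, since $\phi$ is assumed differentiable on the compact set $\Psi(\ell)\supseteq M$ and $g(\xx,\cc)$ is a polynomial; hence $\beta'=\sup_{\xx\in M}\|\bigtriangledown r(\xx)\|$ is finite and Lemma~\ref{thm:Zeng} is applicable to $r$ on every cell.

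Next I would fix $\xx\in M$ and let $K$ be a cell of the mesh containing $\xx$, with vertices $V_1,\dots,V_p$ drawn from the mesh points $\{\vv_1,\dots,\vv_m\}$ and diameter $d_K\le s$ (the mesh being arranged so that each cell has diameter at most the spacing $s$). Writing $\xx$ in barycentric coordinates yields $a_1,\dots,a_p\in\RR_+$ with $\sum_i a_i=1$ and $\xx=\sum_i a_iV_i$. Applying Lemma~\ref{thm:Zeng} to $r$ on $K$ with exactly these coordinates gives
$$
   \Big|r(\xx)-\sum_{i}a_i\,r(V_i)\Big|\le\frac{n}{n+1}\,\beta_K\,d_K,
$$
where $\beta_K=\sup_{\xx\in K}\|\bigtriangledown r(\xx)\|$.

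Then I would replace the two geometric quantities on the right by their global counterparts: since $K\subseteq M$ we have $\beta_K\le\beta'$, and $d_K\le s$, so $\frac{n}{n+1}\beta_K d_K\le\frac{n}{n+1}\beta' s$. For the interpolated term, each $V_i$ is a mesh point, whence $|r(V_i)|\le\mu_0$, and because the $a_i$ are nonnegative and sum to one, $\big|\sum_i a_i r(V_i)\big|\le\sum_i a_i|r(V_i)|\le\mu_0$. The triangle inequality then delivers
$$
   |r(\xx)|\le\Big|r(\xx)-\sum_i a_i r(V_i)\Big|+\Big|\sum_i a_i r(V_i)\Big|\le\frac{n}{n+1}\beta' s+\mu_0,
$$
and since $\xx\in M$ was arbitrary the stated bound holds throughout $M$.

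The main obstacle I anticipate is the geometric bookkeeping relating the mesh spacing $s$ to the cell diameters: Lemma~\ref{thm:Zeng} contributes a factor equal to the diameter $d_K$ of the enclosing polyhedron, so one must set up the mesh so that every cell is a convex polyhedron of diameter at most $s$ (for instance a simplicial subdivision in which $s$ is interpreted as the diameter bound rather than the coordinate step). A secondary point to settle is the reading of $\mu_0$: the displayed estimate needs $\big|\sum_i a_i r(V_i)\big|\le\mu_0$, so $\mu_0$ should be understood as $\max_j|r(\vv_j)|$ (or $r$ known to be nonnegative at the mesh points), after which the convex-combination bound on the interpolated values is immediate.
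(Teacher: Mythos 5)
Your proof is correct and rests on the same two ingredients as the paper's: Lemma~\ref{thm:Zeng} applied to the error function $r$, followed by the triangle inequality and the convex-combination bound $\bigl|\sum_i a_i r(\vv_i)\bigr|\le\mu_0$. The difference is one of rigor rather than of route. The paper's proof is a two-line affair: it invokes Lemma~\ref{thm:Zeng} with \emph{all} mesh points $\vv_1,\dots,\vv_m$ as the vertices and simply writes the factor $\frac{n}{n+1}\beta's$, silently substituting the mesh spacing $s$ for the diameter $d$ that the lemma actually provides --- as stated, the lemma applied to the whole of $M$ would give the diameter of $M$, not $s$. Your localization step repairs exactly this: by placing $\xx$ in a single convex cell $K$ of the mesh with vertices among the mesh points and $d_K\le s$, and then relaxing $\beta_K\le\beta'$, you obtain the factor $\frac{n}{n+1}\beta's$ legitimately, at the modest cost of the geometric bookkeeping (a simplicial or box subdivision in which every cell has diameter at most $s$) that you correctly identify as the main obstacle. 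Your secondary observation is also on target: the paper defines $\mu_0=\max\{r(\vv_1),\dots,r(\vv_m)\}$ without absolute values, yet its own final estimate, like yours, needs $\max_j|r(\vv_j)|$; this is an oversight in the paper's statement that your reading of $\mu_0$ fixes. In short, you prove the same theorem by the same mechanism, but your version is the one that actually withstands scrutiny.
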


\begin{proof}
We know that $r(\xx)$ is a continuous and differential function on
$M$.
Thus, according to Lemma \ref{thm:Zeng}, for all
$a_1,a_2,...a_m\in\RR_+$ such that $ a_1+a_2+...+a_m=1$,
$$|r(\xx)-(a_1r(\vv_1)+a_2r(\vv_2)+...+a_mr(\vv_m))|\leq
\frac{n}{n+1}\beta's.
$$
Then, we have
\begin{equation*}
\begin{split}
|r(\xx)|&\leq\frac{n}{n+1}\beta's+|(a_1r(\vv_1)+a_2r(\vv_2)+...+a_mr(\vv_m))|\leq\frac{n}{n+1}\beta's+\mu_0.
\end{split}
\end{equation*}
\end{proof}


\begin{example}\label{ex4}
Consider the function $\phi(x)=e^x$ with $\Psi:-2\leq x\leq 2$. We
want to compute a polynomial $g(x)$  and the associated verified error bound $\mu$ such that
 $$|\phi(x)-g(x)|<\mu, \,\, -2\leq x\leq 2.$$
First, we construct a mesh $M$ on $\Psi$ with the spacing
$s=\frac{1}{4}$. For a polynomial of the form
$g(x,\cc)=c_0+c_1\,x+c_2\,x^2+c_3\,x^3$, it is easy to find an
approximate polynomial
$$g(\xx,\hat{\cc})=0.9173+0.9562x+0.6797x^2+0.2117x^3.$$ According to
Theorem \ref{thm:bound_error}, we can also compute the error bound
$\mu=0.2937$.
\begin{figure}
\centering
\includegraphics[width=0.35\textwidth,angle=270]{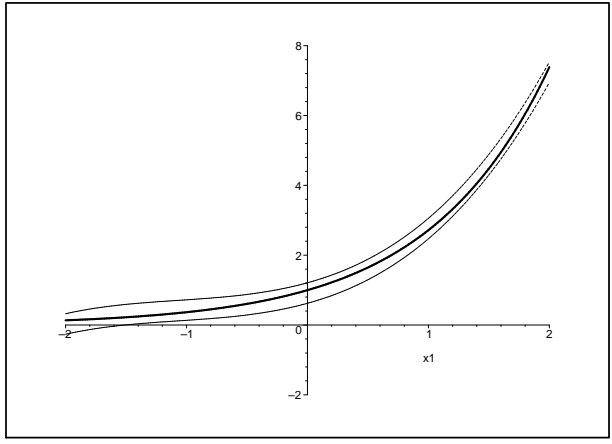} \caption{\small Approximate $e^x,\,-2\leq x\leq 2$ by
$g(x,\hat{\cc})+[-\mu,\mu]$(solid line: $e^x$, dot line:
$g(x,\hat{\cc})\pm\mu$).}
\end{figure}
The results are as shown in Figure 2. $\hfill \Box$
\end{example}

Stated as above,
once we obtain an interval
polynomial hybrid system $\I \HH$ from $\HH$ through polynomial approximation such that $\HH$ is within $\I\HH$,
the method in Section \ref{sect:Interval} can be used to verify
safety of $\I \HH$, which ensures  safety of $\HH$.
The following example is presented to illustrate our method for safety
verification of a non-polynomial hybrid system.

\begin{example}\label{ex:two-tank}
Consider the following two-tanks hybrid system
\cite{ratschan2007safety} depicted in Figure \ref{hybrid1} with
$$f_1(\xx)=
\begin{bmatrix}
1-\sqrt{x_1}  \\
\sqrt{x_1}-\sqrt{x_2}
\end{bmatrix}
, \quad f_2(\xx)=
\begin{bmatrix}
1-\sqrt{x_1-x_2+1} \\
\sqrt{x_1-x_2+1}-\sqrt{x_2}
\end{bmatrix}
,$$ where $\xx=(x_1, x_2)$ denotes the liquid levels. In
\cite{ratschan2007safety}, the authors verified that the system
starting in location $\ell_1$ with an initial state in
$$\Theta=\{(x_1,x_2)\in \RR^2: (x_1-5.5)^2+(x_2-0.25)^2 \leq 0.0625\}$$
never reach the states of
\begin{equation*}
\begin{split}
X_u(\ell_1)=\{(x_1,x_2)\in \RR^2: (x_1-4.25)^2+(x_2-0.25)^2 \leq
0.0625 \}.
\end{split}
\end{equation*}
\begin{figure}
\centering
\includegraphics[width=0.65\textwidth]{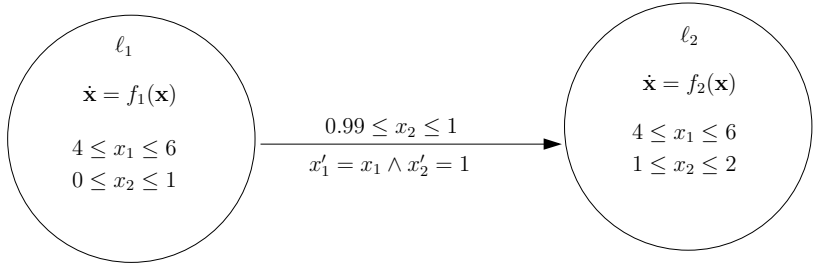} \caption{\small Hybrid system
of Example~\ref{ex:two-tank} \label{hybrid1}}
\end{figure}
Here, we enlarge both radii of initial and unsafe regions to $0.49$,
that is,
$$\Theta=\{(x_1,x_2)\in \RR^2: (x_1-5.5)^2+(x_2-0.25)^2 \leq 0.2401\}$$
and
\begin{equation*}
\begin{split}
X_u(\ell_1)=\{(x_1,x_2)\in \RR^2: (x_1-4.25)^2+(x_2-0.25)^2 \leq
0.2401 \},
\end{split}
\end{equation*}
and consider again safety verification of the given system.
We first compute an interval polynomial system given by
$[{f}_1](\xx)$ and $[{f}_2](\xx)$ to enclosure the original system
where
\begin{equation*}
\begin{split} &[{f}_1](\xx)=\begin{bmatrix}
[0.1658,0.173]-0.3377x_1+0.0114x_1^2 \\
[0.6465 ,0.8615 ]+0.3377x_1-1.7115x_2-0.0114x_1^2+0.8241x_2^2
\end{bmatrix}
\end{split}
\end{equation*}
and
\begin{equation*}
\begin{split} &[{f}_2](\xx)
=\begin{bmatrix}
-[0.1204,0.132]-0.3316x_1+0.3319x_2+0.0135x_1^2-0.0269x_1x_2+0.0137x_2^2 \\
[0.6716,0.6898]+0.3316x_1-0.9576x_2-0.0135x_1^2+0.0269x_1x_2+0.0572x_2^2
\end{bmatrix}.
\end{split}
\end{equation*}
We obtain the following verified invariants with rational
coefficients
\begin{equation*}
\begin{split}
&\widetilde{\varphi}_{1}(\xx)=-{\frac {1069}{994}}-{\frac
{145}{142}}{ x_1}-{\frac {367}{497}} { x_2}+{\frac
{121}{497}}x_1^{2}+{\frac {160}{497}}{ x_1}{ x_2}+{\frac {242}{497}}
x_2^{2}, \\
&\widetilde{\varphi}_{2}(\xx)={\frac {9621}{994}}-{\frac
{20}{71}}{x_1}+{\frac {899}{497}}{ x_2}+{\frac
{989}{994}}x_1^{2}+{\frac {349}{497}}{x_1}{x_2}-{\frac
{1487}{994}}x_2^{2},
\end{split}
\end{equation*}
which satisfy the conditions in Theorem~\ref{thm:inv_int} exactly.
Therefore,
the safety of the original hybrid system is verified. $\hfill \Box$
\end{example}

The above approach can be easily extended to the case of {\em
uncertain non-polynomial hybrid systems}, by which we mean the
continuous dynamics at each location $\ell$ are given by uncertain non-polynomial systems of the form
\begin{equation}\label{eq:nonpolynomial2}
  \dot{x}_i=f_i(\xx,\theta)=f_{i0}(\xx,\theta)+\sum_{j=1}^s f_{ij}(\xx,\theta)\phi_{ij}(\xx),\quad\mbox{for } 1\leq i\leq n,
\end{equation}
where $\theta\in\Phi\subseteq\RR^t$ is a vector of uncertainty. The
following example demonstrates how to apply the above approach to
verify
 safety of an uncertain non-polynomial system.

\begin{example}
Consider an uncertain non-polynomial system given in
\cite{chesi2009estimating}:
\begin{eqnarray*}\label{ex_3}
\left\{\begin{array}{l@{}l}
   \dot{x}_1=-x_1+x_2+\frac{1}{2}(e^x_1-1), \\
   \dot{x}_2=-x_1-x_2+\theta x_1x_2+x_1\cos{x_1},
   \end{array}\right.
\end{eqnarray*}
for $-2\leq x_1, x_2\leq 2$ and $0.98\leq \theta\leq 1.2$. This system
starts with an initial state in
$$\Theta=\{(x_1,x_2)\in \RR^2: x_1^2+x_2^2 \leq 0.25\},$$
and we want to verify that the system never reach the states of
$$X_u=\{(x_1,x_2)\in \RR^2: (x_1+1.5)^2+(x_2+1.5)^2\leq0.16 \}.$$
To prove the safety of this non-polynomial system, we first compute
interval polynomials to approximate the non-polynomial terms
$e^{x_1}$ and $\cos x_1$ occurred in this system. Based on the above
techniques, we obtain the following interval polynomial system
\begin{equation*}\label{ex5'}
\begin{split}
   &\dot{x}_1=[-0.1882,0.1055]-0.5219x_1+x_2+0.33985x_1^2+0.10585x_1^3, \\
   &\dot{x}_2=[-0.2067,0.0875]x_1-x_2+\theta x_1x_2-0.3594x_1^3,
   \end{split}
\end{equation*}
which enclosures the original system.
For the above interval hybrid system, we obtain the following
verified invariant with rational coefficients
\begin{equation*}
   \widetilde{\varphi}(\xx)={\frac {343}{32}}+{\frac {31}{6}}\,{x_1}+{\frac {25}{48}}\,{x_2}
-{\frac {49}{32}}x_1^{2}-{ \frac {17}{48}}\,{x_1}\,{x_2} -{\frac
{55}{32}}x_2^{2},
\end{equation*}
which guarantees the safety of the original system.
\end{example}

\section{Conclusion}\label{sect:conclusion}
In this paper, a hybrid symbolic-numeric method, based on SOS
relaxation and interval arithmetic certification, is proposed to
generate exact inequality invariants for safety verification of
interval hybrid systems. As an application,
one approach is provided to verify safety
property of non-polynomial hybrid systems. More precisely,
we apply a rigorous polynomial approximation method to compute an
interval polynomial system, which contains the non-polynomial
system, and then compute the exact invariant of the corresponding
interval polynomial system to verify the safety property of the
original system. Experiments on the benchmark hybrid systems
illustrate the efficiency of our algorithm.


\section*{Appendix}


\subsection*{A. Interval Arithmetic}\label{appendix B}

Interval arithmetic \cite{alefeld1983introduction} has been designed
for automatically
tackling roundoff errors of numerical computations.
In this subsection, some notions
about interval arithmetic are presented.

Denote the closed intervals by $[a],[b]$, etc. By convention,  the left and right endpoints of a closed
interval $[a]$ are represented by $\underline a$ and $\overline a$,
respectively, i.e.,
$$[a]=\{x\in \RR:\, \underline a \le x\le \overline a\}$$
with $\underline{a}=\inf[a]$ and $\overline{a}=\sup[a]$.
Any real number $a$ can also be regarded as an interval by
identifying
$a
$ with the ``point interval"
$[a]$ with $\underline a =\overline a =a$.
Such point intervals are also called {\em degenerate} intervals. Let
$$\text{mid}([a]):=\frac{1}{2}(\underline{a}+\overline{a})\quad\mbox{
and}\quad \text{rad}[a]:=\frac{1}{2}(\overline{a}-\underline{a})$$ be
the midpoint and radius of the closed interval $[a]$, respectively.
Clearly, an interval
can also be represented by its midpoint and radius. The set of all
intervals
over $\RR$ is denoted by $\III\RR$. The arithmetic operations
$+,-,*,\div$ can be extended from $\RR$ to $\III\RR$ in the usual
set theoretic sense, and the bounds of the resulting intervals can
be computed from the bounds of the operands, see
\cite{alefeld1983introduction} for details.

By $\III\RR^n$ and $\III\RR^{m\times n}$
we denote the sets of real $n$-dimensional vectors and $m\times n$
matrices
over $\III\RR$, respectively.
Elements of $\III\RR^n$ are called interval vectors and denoted by
$[\mathbf{a}], [\bb]$ and etc, and elements of $\III\RR^{m\times n}$
are called interval matrices and denoted by $[A], [B]$ and etc.
Remark that interval vectors (resp.\ interval matrices) are sets of
vectors (resp.\ matrices). For interval vectors and matrices, the
notions of  midpoints and radius, and the arithmetic operations are
defined
componentwise.

By an {\em interval} linear system, we mean a system of the form
\begin{equation}\label{int:linear}
[A]\xx=[\bb],
\end{equation}
where $[A]\in\III\RR^{n\times n}$, $[\bb]\in\III\RR^n$ and
$\xx=(x_1, \dots, x_n)^T$ is a column vector of $n$ unknowns. The
set
$$
   \Sigma=\{\xx\in\RR^n:\, A\xx=\bb \text{ for some } A\in [A], \bb\in[\bb]\}
$$
is called the {\em solution set} of the interval system
(\ref{int:linear}).
Many efficient algorithms are available in \cite{hansen1992bounding,
rohn1995computing, rump1992solution, rump2010verification}
for obtaining guaranteed inclusions \cite{rump2010verification}
for the solution set $\Sigma$.

Let $\ff: \RR^n\rightarrow\RR^n$ be a continuously differentiable
function. Replacing the real vector $\xx$ by an intervector
$[\xx]\in\III\RR^n$ we thus obtain an interval extension $[\ff]$ of
$\ff$. By the inclusion property of interval arithmetic, the range
of $\ff$ over an interval is contained in its interval extension,
i.e. $\{\ff(\xx):\, \xx\in[\xx]\}\subseteq[\ff]([\xx])$.
%
To determine existence of solutions to
the nonlinear system $\ff(\xx)=0$, we will use the Krawczyk
operators \cite{krawczyk1969newton}
based on Browder fixed points, which is defined as follows. Assume
that $[\xx]\in\III\RR^n$ is an interval set satisfying $\hat{\xx}\in
[\xx]$, and $C\in\RR^{n\times n}$. The Krawczyk operator is defined
as follows
$$
   K(\hat{\xx}, [\xx],\ff)=\hat{\xx}-C\ff({\hat{\xx}})+(I-C[\ff']([\xx]))([\xx]-\hat{\xx}).
$$
In practical computation, $C$ is usually chosen to be near the
inverse of the Jacobian $\ff'(\hat{\xx})$.
\begin{theorem}\label{thm:Ky} \cite{rump2010verification}
   Under the above assumptions, if
$$
  K(\hat{\xx}, [\xx], \ff)\subset \text{int}([\xx]),
$$
where $\text{int}([\xx])$ is the topological interior of $[\xx]$,
then there exists a unique $\xx^*\in K(\hat{\xx}, [\xx], \ff)$ such
that $\ff(\xx^*)=0$.
\end{theorem}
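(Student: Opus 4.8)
The plan is to recast the zero-finding problem for $\ff$ as a fixed-point problem for the simplified Newton (Picard) map $g(\xx)=\xx-C\ff(\xx)$ and to apply Brouwer's fixed point theorem on the box $[\xx]$. The first and decisive enclosure to establish is $g([\xx])\subseteq K(\hat\xx,[\xx],\ff)$. To prove it I would fix $\xx\in[\xx]$ and write, via the integral form of the mean value theorem, $\ff(\xx)-\ff(\hat\xx)=M(\xx-\hat\xx)$ with $M=\int_0^1 \ff'(\hat\xx+t(\xx-\hat\xx))\,dt$. Because $[\xx]$ is convex, every point $\hat\xx+t(\xx-\hat\xx)$ lies in $[\xx]$, so each entry of $M$ lies in the corresponding entry of $[\ff']([\xx])$, i.e. $M\in[\ff']([\xx])$. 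Substituting and rearranging gives $g(\xx)=\hat\xx-C\ff(\hat\xx)+(I-CM)(\xx-\hat\xx)$, and the inclusion property of interval arithmetic then yields $g(\xx)\in K(\hat\xx,[\xx],\ff)$, as desired.

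Combined with the hypothesis $K\subset\text{int}([\xx])\subseteq[\xx]$, this shows that the continuous map $g$ sends the compact convex set $[\xx]$ into itself. Brouwer's fixed point theorem then supplies $\xx^*\in[\xx]$ with $g(\xx^*)=\xx^*$, that is $C\ff(\xx^*)=0$; moreover $\xx^*=g(\xx^*)\in g([\xx])\subseteq K(\hat\xx,[\xx],\ff)$, which already places the solution inside $K$ as the statement requires.

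The remaining task is to upgrade $C\ff(\xx^*)=0$ to $\ff(\xx^*)=0$ and to prove uniqueness; both follow once the interval iteration matrix $[L]=I-C[\ff']([\xx])$ is shown to be contractive. Taking $\hat\xx=\text{mid}([\xx])$ (the standard choice), the radius of $K$ equals $|[L]|\,\rr$, where $\rr=\text{rad}([\xx])$ and $|[L]|$ denotes the nonnegative magnitude matrix of $[L]$. Since a degenerate interval has empty interior, the strict inclusion forces $\rr>0$ componentwise and $\text{rad}(K)<\rr$, i.e. $|[L]|\,\rr<\rr$. A Perron--Frobenius argument (a nonnegative matrix that strictly decreases some positive vector has spectral radius below one) then gives $\rho(|[L]|)<1$. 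Consequently, for every $A\in[\ff']([\xx])$ one has $\rho(I-CA)\le\rho(|I-CA|)\le\rho(|[L]|)<1$, so $CA$, and hence $C$ itself, is nonsingular; nonsingularity of $C$ converts $C\ff(\xx^*)=0$ into $\ff(\xx^*)=0$. For uniqueness, if $\ff(\yy)=0$ for some other $\yy\in K\subseteq[\xx]$, the mean value theorem gives $0=\ff(\xx^*)-\ff(\yy)=M'(\xx^*-\yy)$ with $M'\in[\ff']([\xx])$ nonsingular, forcing $\yy=\xx^*$.

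The hard part will be this last paragraph: translating the purely set-theoretic strict inclusion $K\subset\text{int}([\xx])$ into the quantitative spectral bound $\rho(|[L]|)<1$, together with the bookkeeping that identifies $\text{rad}(K)$ with $|[L]|\,\rr$. Everything else, namely the mean value enclosure and the Brouwer step, is routine; but the contraction estimate is what simultaneously delivers nonsingularity of $C$, the genuine root $\ff(\xx^*)=0$, and uniqueness, so it is the crux of the proof.
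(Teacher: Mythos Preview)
The paper does not prove this theorem; it is quoted from Rump's survey and used as a black box, so there is no ``paper's own proof'' to compare against. Your outline is the standard argument for the Krawczyk test: the mean-value enclosure $g([\xx])\subseteq K$, Brouwer's theorem for existence, and then a spectral-radius bound on $|[L]|$ to obtain nonsingularity of $C$ and uniqueness.

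There is one genuine gap. In deriving $\rho(|[L]|)<1$ you write ``Taking $\hat\xx=\text{mid}([\xx])$ (the standard choice), the radius of $K$ equals $|[L]|\,\rr$''; but the theorem as stated assumes only $\hat\xx\in[\xx]$, not that $\hat\xx$ is the midpoint, so as written you are proving a strictly weaker result. The repair is to replace radii by widths. Set $[\yy]=[\xx]-\hat\xx$, so that $0\in[\yy]$; for scalar intervals $[a]$ and $[b]$ with $0\in[b]$ a short case analysis gives $\text{width}([a][b])\ge |[a]|\,\text{width}([b])$, and summing over entries yields $\text{width}([L][\yy])\ge |[L]|\,\text{width}([\yy])$ componentwise. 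The strict inclusion $K\subset\text{int}([\xx])$ forces $\text{width}([L][\yy])=\text{width}(K)<\text{width}([\xx])=\text{width}([\yy])$, so $|[L]|\,\ww<\ww$ for the strictly positive vector $\ww=\text{width}([\yy])$, and your Perron--Frobenius step then delivers $\rho(|[L]|)<1$ without any midpoint hypothesis. With this adjustment the remainder of your argument (nonsingularity of $C$, hence $\ff(\xx^*)=0$, and uniqueness via nonsingularity of every $M'\in[\ff']([\xx])$) goes through unchanged.
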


INTLAB is a MATLAB toolbox \cite{rump1999intlab}, which consists of
interval calculations,
and interval arithmetic for vectors and matrices.
Many interval operations
in this paper are implemented in MATLAB that uses the INTLAB package
supporting rigorous real interval standard functions and
interval least squares problem.

\subsection*{B. Sum of Squares Relaxation}\label{subsect:sos}
We give a brief review on SOS optimization. More details will be
found in \cite{parrilo2000structured}.
Recall that a sufficient condition for determining
$\psi(\xx)\in\RR[\xx]$ to be positive semidefinite
is that there exists an SOS decomposition of $\psi(\xx)$:
   \begin{equation}\label{matrix_form}
    \psi(\xx)=\sum_{i=1}^s {f_{i}}^{2}(\xx), \quad \text{
   with} \,\, f_{i}(\xx)\in\RR[\xx],
   \end{equation}
or, equivalently,
$\psi(\xx)$ can be represented in the Gram matrix form
   $$
 \psi(\xx)=m(\xx)^{T}\cdot W \cdot m(\xx),
$$
where~$W$ is a real symmetric and positive semidefinite matrix over
$\RR$,
 and $m(\xx)$ is a vector of
 all monomials
 in~$\RR[\xx]$ with degree $\leq \frac{1}{2}
\deg r(\xx)$.
Therefore the SOS program (\ref{matrix_form}) can be further
converted into the following Semidefinite programming (SDP) problem
\begin{equation}\label{SOSW}
\left.\begin{array}{l@{}l}
\displaystyle \inf_{W} \ & \text{Trace}(W) \\
\text{s.\ t.} \ &  \psi(\xx)=m(\xx)^T \cdot W \cdot  m(\xx) \smallskip\\
& W \succeq 0, W^T = W,
\end{array}\right\}
\end{equation}
where $\text{Trace}(W)$ acts as a dummy objective function that is
commonly used in SDP for optimization problem with no objective
functions. Many Matlab packages of SDP solvers, such as SOSTOOLS
\cite{sostools}, YALMIP \cite{yalmip}, and SeDuMi \cite{Sturm99},
are available to solve
the SDP problem~(\ref{SOSW}) efficiently.

The SOS programs have many applications,
for example, in determining the nonnegativity of a multivariate
polynomial over a semialgebraic set.
Consider the problem of verifying whether the implication
\begin{equation}\label{Implication}
\bigwedge_{i=1}^m(p_i(\xx)\ge 0)
\implies
 q(\xx)\ge 0
\end{equation}
holds, where $p_i(\xx)\in \RR[\xx]$ for~$1\le i\le m$ and $q(\xx)\in
\RR[\xx]$.
According to Stengle's Positivstellensatz, Schm\"{u}dgen's
Positivstellensatz or Putinar's Positivstellensatz
\cite{bochnak1998real}, if there exist SOS polynomials
$\sigma_{i}\in\RR[\xx]$ for $i=0,...,m$, such that
$$ q(\xx)=
\sigma_{0}(\xx)+\sum_{i=1}^m\sigma_i(\xx)p_i(\xx), $$ then the
assertion
(\ref{Implication}) holds. Therefore, the existence of SOS
representations provides a sufficient condition for determining the
nonnegativity of $q(\xx)$ over $\{\xx\in \RR^n : \bigwedge_{i=1}^m
p_i(\xx)\ge 0\}$.

\subsection*{C. Existence of Real Roots for
Underdetermined Interval Nonlinear Systems}


Consider a nonlinear system
\begin{equation}\label{interval:underdetermined}
   F(\qq)-[\vv]=0.
\end{equation}
where $F: \RR^r\rightarrow\RR^s$ a continuously differentiable
function with $r>s$, $r=\text{Dim}(\qq)$, and $[\vv]\in\III\RR^s$.
To determine the existence of solution to system
(\ref{interval:underdetermined}), we present two methods as follows.
The idea of the first method is to transform the underdetermined interval system into the corresponding
interval square nonlinear system by fixing some variables as constants,
and then generalize Theorem~\ref{thm:Ky} to verify the existence of real roots
for this square nonlinear system,
while
the second method is to generalize the method
in~\cite{chen2006existence} to
solve the interval underdetermined system
(\ref{int:underdetermined1}).


Firstly, suppose that $\hat{\qq}$ is an approximate solution of
(\ref{interval:underdetermined}). Here we assume that the Jacobian
matrix  $F'(\qq)$ at $\hat{\qq}$ is of full row rank. Column
pivoting $QR$-decomposition for $F'(\qq)$ is applied to choose an
index set $B=\{k_1,k_2,...,k_s\}$ such that
$F'_{B}({\hat{\qq}})\in\RR^{s\times s}$ is nonsingular, that is,
$$
 F'({\hat{\qq}})\,P^T=Q\,[\,R\,|\,*\,]\quad\mbox{with } P\in\RR^{r\times r}, \, Q,R\in\RR^{s\times
 s},
$$
where $P$ is a permutation matrix, $Q$ is orthogonal and $R$ is
upper triangular. The permutation $P$ arises from a greedy strategy
to obtain maximum diagonal elements in $R$. Then, the set $B$ can be
taken as those components which are permuted to the first $s$
positions by $P$. Thus, $\qq$ can be separated into two parts
$\qq=(\qq_B,\qq_N)$, where $N=\{1,2,...r\}/B$. Similar to the
partition of $\qq$, we have
${\hat{\qq}}=({\hat{\qq}}_B,{\hat{\qq}}_N)$. By use of the
evaluations $\qq_N=\hat{\qq}_N$,  (\ref{interval:underdetermined})
becomes the following interval square system
\begin{equation}\label{eq:square_int}
  G(\qq_B)-[\tilde{\vv}]=0,
\end{equation}
where $G(\qq_B)=F(\qq_B,\hat{\qq}_N)-\cc$, and
$[\tilde{\vv}]=\cc+[\vv]$, and $\cc$ is the constant vector of
$F(\qq_B,\hat{\qq}_N)$, i.e., $\cc=F(0,\hat{\qq}_N)$.

Observing in (\ref{eq:square_int}), 
the interval
coefficients only occur in the constant vector $[\tilde{\vv}]$, and
$G(\qq_B)$ is a
real function from $\RR^{s}$ to $\RR^{s}$,
meaning that the Jacobian matrix of (\ref{eq:square_int}) is the same
as one exact square system $G(\qq_B)-\vv=0$, where $\vv$ is
a
vector chosen randomly. Taking advantage of this property, it is
easy to generalize Theorem~\ref{thm:Ky} to verify the existence of
real roots for (\ref{eq:square_int}).

\begin{theorem}\label{cor:square}
Consider the system (\ref{eq:square_int}). Let $[\qq_B]\in\III\RR^s$
be
such that $\hat{\qq}_B\in[\qq_B]$, and $C\in\RR^{s\times s}$.
If
\begin{equation}\label{Ky2}
\begin{split}
   &K({\hat{\qq}}_B, [\qq_B], G-[\tilde{\vv}])=\hat{\qq}_B-C(G({\hat{\qq}}_B)-[\tilde{\vv}])+(I-CG'([\qq_B]))([\qq_B]-{\hat{\qq}}_B)\,\subset\,
\text{int}([\qq_B]),
\end{split}
\end{equation}
then there is a unique root ${\qq}^*_B$ of (\ref{eq:square_int}) in
$[\qq_B]$ for each $\vv\in[\tilde{\vv}]$.
\end{theorem}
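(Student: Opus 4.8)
The plan is to reduce Theorem~\ref{cor:square} to the already-established Krawczyk fixed-point criterion, Theorem~\ref{thm:Ky}, by exploiting the feature noted just before the statement: in the square system (\ref{eq:square_int}) the interval data enter \emph{only} through the constant vector $[\tilde{\vv}]$, whereas $G:\RR^s\to\RR^s$ and its Jacobian $G'$ are genuine, non-interval maps. First I would fix an arbitrary point $\vv\in[\tilde{\vv}]$ and pass to the associated \emph{exact} square system $\ff_\vv(\qq_B):=G(\qq_B)-\vv=0$. Because $\ff_\vv$ is continuously differentiable with Jacobian $\ff_\vv'=G'$ that is independent of $\vv$, its interval Jacobian extension over $[\qq_B]$ is precisely the interval matrix $G'([\qq_B])$ that appears in the operator of (\ref{Ky2}). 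Thus Theorem~\ref{thm:Ky} is directly applicable to each $\ff_\vv$, provided its (point-data) Krawczyk image lies in $\text{int}([\qq_B])$.

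The key step is the set inclusion $K(\hat{\qq}_B,[\qq_B],\ff_\vv)\subseteq K(\hat{\qq}_B,[\qq_B],G-[\tilde{\vv}])$, valid for every $\vv\in[\tilde{\vv}]$. To establish it I would compare the two operators term by term. The term $(I-CG'([\qq_B]))([\qq_B]-\hat{\qq}_B)$ is literally identical in both, since $G'([\qq_B])$, $C$, $[\qq_B]$ and $\hat{\qq}_B$ do not depend on $\vv$. The only dependence on $\vv$ sits in the summand $-C(G(\hat{\qq}_B)-\vv)$, which by the inclusion property (inclusion-isotonicity) of interval arithmetic together with $\vv\in[\tilde{\vv}]$ is contained in the interval quantity $-C(G(\hat{\qq}_B)-[\tilde{\vv}])$. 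Adding back the common term preserves the inclusion, giving the claim. Consequently the hypothesis $K(\hat{\qq}_B,[\qq_B],G-[\tilde{\vv}])\subset\text{int}([\qq_B])$ forces $K(\hat{\qq}_B,[\qq_B],\ff_\vv)\subset\text{int}([\qq_B])$ for every such $\vv$.

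With this in hand I would invoke Theorem~\ref{thm:Ky}, taking $n=s$ and $\ff=\ff_\vv=G-\vv$, to obtain a root $\qq_B^*$ of $\ff_\vv$ lying in $[\qq_B]$, i.e.\ $G(\qq_B^*)=\vv$; as $\vv\in[\tilde{\vv}]$ was arbitrary, this produces a root of (\ref{eq:square_int}) for each $\vv\in[\tilde{\vv}]$. For the uniqueness assertion I would rely on the contraction property underlying the Krawczyk test: the inclusion $K\subset\text{int}([\qq_B])$ forces $C$ to be nonsingular and makes the Krawczyk iteration a strict contraction on $[\qq_B]$, so that any two zeros of $\ff_\vv$ inside $[\qq_B]$ are both fixed points of the contraction and must therefore coincide. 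This upgrades existence to the \emph{unique} root $\qq_B^*$ in $[\qq_B]$ demanded by the statement.

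I expect the main obstacle to be one of careful bookkeeping rather than deep difficulty. Two points require attention: justifying the term-by-term inclusion of the two Krawczyk operators rigorously from inclusion-isotonicity (so that the $\vv$-dependent point operator is genuinely swallowed by the interval operator), and ensuring that the uniqueness obtained is uniqueness in the full box $[\qq_B]$ as claimed, rather than merely within the image $K$ as literally phrased in Theorem~\ref{thm:Ky}; the latter needs the contraction argument above and not just the bare existence half of that theorem.
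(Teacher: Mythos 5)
Your proposal follows essentially the same route as the paper's own proof: fix $\vv\in[\tilde{\vv}]$, observe by inclusion-isotonicity of interval arithmetic that $K(\hat{\qq}_B,[\qq_B],G-\vv)\subseteq K(\hat{\qq}_B,[\qq_B],G-[\tilde{\vv}])\subset \text{int}([\qq_B])$ (only the constant term depends on $\vv$, the Jacobian term being $\vv$-independent), and then apply Theorem~\ref{thm:Ky} to the exact square system $G(\qq_B)-\vv=0$. Your added care in upgrading uniqueness within the Krawczyk image to uniqueness in the full box $[\qq_B]$ is a refinement the paper silently elides, but it does not change the structure of the argument.
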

\begin{proof}
If (\ref{Ky2}) holds, then we have, for each $\vv\in[\tilde{\vv}]$,
\begin{equation*}
\begin{split} &K({\hat{\qq}}_B, [\qq_B],
G-\vv)=\hat{\qq}_B-C(G({\hat{\qq}}_B)-\vv)+(I-CG'([\qq_B]))([\qq_B]-{\hat{\qq}}_B)\,\subset\,
\text{int}([\qq_B]).
\end{split}
\end{equation*}
According to Theorem \ref{thm:Ky}, for $\vv$ there exists a unique
root ${\qq}^*_B$ of (\ref{eq:square_int}) in $[\qq_B]$. Hence, for
each $\vv\in[\tilde{\vv}]$, there exists a unique root for the
system (\ref{eq:square_int}) if (\ref{Ky2}) holds.
\end{proof}


Alternatively, we also can apply another method provided
in \cite{chen2006existence}, to determine the existence of real
roots for the underdetermined system (\ref{int:underdetermined1})
directly. The only difference is that we need deal with a special
interval underdetermined system while they worked on an exact one.
For the same reason as in the above discussion, it is easy to generalize their
method in \cite{chen2006existence} to deal with  our problem.

Suppose the Jacobian $F'(\hat{\qq})$ is of full row rank. Following
\cite{chen2006existence}, we apply the column pivoting
$QR$-decomposition to choose an index set $B=\{k_1,k_2,...,k_s\}$
such that $F'_{B}({\hat{\qq}})\in\RR^{s\times s}$ is nonsingular.
Then, define the function $H:\RR^r\rightarrow\RR^r$ by
\begin{eqnarray*}\label{H_form2}
\left\{\begin{array}{l@{}l}
   &H_{B}(\qq)=\qq_{B}-F'_{B}({\hat{\qq}})^{-1}(F(\qq)-\vv),\\
   &H_{N}(\qq)=\qq_{N}-\alpha(\qq_{N}-{\hat{\qq}}_{N}),
\end{array}\right.
\end{eqnarray*}
where $N=\{1,2,...r\}/B$ and $\alpha\in(0,1)$ is a constant.
Obviously, if ${\qq}^*\in\RR^r$ is a fixed point of $H$, that is
$H({\qq}^*)={\qq}^*$, then we have $F({\qq}^*)-\vv=0$ with
${\qq}^*_N={\hat{\qq}}_N$. Choose two nonnegative numbers $r_1$ and
$r_2$, we define the convex set
$$
  [\qq]=\{\qq\in\RR^r: \|\qq_B-{\hat{\qq}}_B\|\leq r_1,\, \|\qq_N-{\hat{\qq}}_N\|\leq
  r_2\}.
$$
Now, we can use the following theorem to determine the existence of
solutions to the system (\ref{int:underdetermined1}).

\begin{theorem}\label{cor:solution}
Consider the system (\ref{int:underdetermined1}). Suppose the Jacobian
$F'(\hat{\qq})$ has full row rank, and that
$$
 \|F'_B(\qq)-F'_B({\hat{\qq}})\|\leq K\|\qq-{\hat{\qq}}\|  \text{ for } \qq\in [\qq].
$$
There is a solution ${\qq}^*$ of (\ref{int:underdetermined1}) in
$[\qq]$  for each $\vv\in[\vv]$ if
\begin{equation*}\label{cor:condition}
\begin{split}
  &\max_{\vv\in [\vv]}{\|F'_B({\hat{\qq}})^{-1}(F({\hat{\qq}})-\vv)\|}+\|F'_B({\hat{\qq}})^{-1}\| (\frac{1}{2}K(r_1+r_2)r_1+\max_{\qq\in [\qq]}\|F'_N(\qq)\|r_2)\leq r_1.
\end{split}
\end{equation*}
\end{theorem}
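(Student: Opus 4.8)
The plan is to recast the existence question as a fixed-point problem and apply Brouwer's fixed-point theorem to the set $[\qq]$. Fix an arbitrary $\vv\in[\vv]$ and work with the map $H$ introduced just before the statement. As already noted, any fixed point $\qq^*$ of $H$ automatically satisfies $\qq^*_N=\hat{\qq}_N$ (because $\alpha\neq 0$) and $F(\qq^*)-\vv=0$ (because $F'_B(\hat{\qq})$ is nonsingular), so it suffices to exhibit a fixed point of $H$ inside $[\qq]$. Since $[\qq]$ is a product of two balls, it is closed, bounded and convex, and $H$ is continuous; hence Brouwer's theorem produces such a point as soon as the self-map property $H([\qq])\subseteq[\qq]$ is established. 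Thus the whole proof reduces to verifying this inclusion block by block.

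The $N$-block is immediate: $H_N(\qq)-\hat{\qq}_N=(1-\alpha)(\qq_N-\hat{\qq}_N)$, so $\|H_N(\qq)-\hat{\qq}_N\|=(1-\alpha)\|\qq_N-\hat{\qq}_N\|\le(1-\alpha)r_2\le r_2$ for $\qq\in[\qq]$, using $\alpha\in(0,1)$. The substance lies in the $B$-block bound $\|H_B(\qq)-\hat{\qq}_B\|\le r_1$. To obtain it I would start from the integral mean-value identity
$$F(\qq)-\vv=\big(F(\hat{\qq})-\vv\big)+\int_0^1 F'\big(\hat{\qq}+t(\qq-\hat{\qq})\big)\,dt\,(\qq-\hat{\qq}),$$
split the Jacobian into its $B$- and $N$-columns, and insert this into $H_B(\qq)-\hat{\qq}_B=(\qq_B-\hat{\qq}_B)-F'_B(\hat{\qq})^{-1}(F(\qq)-\vv)$. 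This decomposes the increment into three pieces: a constant term $-F'_B(\hat{\qq})^{-1}(F(\hat{\qq})-\vv)$; a term in which $(\qq_B-\hat{\qq}_B)$ recombines with the $B$-part of the integral to yield $F'_B(\hat{\qq})^{-1}\int_0^1\big(F'_B(\hat{\qq})-F'_B(\hat{\qq}+t(\qq-\hat{\qq}))\big)(\qq_B-\hat{\qq}_B)\,dt$; and a term $-F'_B(\hat{\qq})^{-1}\int_0^1 F'_N(\hat{\qq}+t(\qq-\hat{\qq}))(\qq_N-\hat{\qq}_N)\,dt$ coming from the $N$-columns.

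I would then bound the three pieces separately. The first is at most $\max_{\vv\in[\vv]}\|F'_B(\hat{\qq})^{-1}(F(\hat{\qq})-\vv)\|$. For the second, the Lipschitz hypothesis gives $\|F'_B(\hat{\qq})-F'_B(\hat{\qq}+t(\qq-\hat{\qq}))\|\le Kt\|\qq-\hat{\qq}\|$, and combining this with $\int_0^1 t\,dt=\frac{1}{2}$, the convexity bound $\|\qq-\hat{\qq}\|\le r_1+r_2$ valid on $[\qq]$, and $\|\qq_B-\hat{\qq}_B\|\le r_1$ shows this piece is at most $\frac{1}{2}\|F'_B(\hat{\qq})^{-1}\|K(r_1+r_2)r_1$. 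For the third, since $\hat{\qq}+t(\qq-\hat{\qq})\in[\qq]$ for $t\in[0,1]$ by convexity, it is at most $\|F'_B(\hat{\qq})^{-1}\|\max_{\qq\in[\qq]}\|F'_N(\qq)\|\,r_2$. Adding the three bounds reproduces exactly the left-hand side of the hypothesized inequality, which is $\le r_1$, so the inclusion holds. The main obstacle is the bookkeeping in this three-way split --- in particular recombining $(\qq_B-\hat{\qq}_B)$ with the $B$-part of the integral remainder so as to expose the factor $\int_0^1 t\,dt=\frac{1}{2}$, and fixing the vector norm so that $\|\qq-\hat{\qq}\|\le r_1+r_2$ holds throughout $[\qq]$. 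Finally, since $\vv\in[\vv]$ was arbitrary, a solution $\qq^*$ exists for every $\vv\in[\vv]$, as claimed.
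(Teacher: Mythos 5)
Your proof is correct and is essentially the paper's own route: the paper gives no explicit proof of this theorem, deferring to the fixed-point method of \cite{chen2006existence}, and your argument --- Brouwer's theorem on the compact convex set $[\qq]$, the trivial $N$-block contraction, and the three-way splitting of $H_B(\qq)-\hat{\qq}_B$ via the integral mean-value identity with the factor $\int_0^1 t\,dt=\tfrac{1}{2}$ recovered by recombining $(\qq_B-\hat{\qq}_B)$ with the $B$-columns of the Jacobian --- is precisely that method, correctly adapted to the interval setting by taking the maximum over $\vv\in[\vv]$ in the constant term. The subtleties you flag (convexity of $[\qq]$ keeping the segment points inside the Lipschitz region, and choosing the norm so that $\|\qq-\hat{\qq}\|\le r_1+r_2$ on $[\qq]$) are exactly the right ones, and your observation that a fixed point of $H$ forces $\qq^*_N=\hat{\qq}_N$ and $F(\qq^*)=\vv$ completes the argument as intended.
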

Remark that Theorem \ref{cor:square} is a special case of Theorem
\ref{cor:solution} by setting $r_2=0$. Compared with
Theorem~\ref{cor:solution}, the condition (\ref{Ky2}) in Theorem
\ref{cor:square} is easy to verify in practice.

\end{document}